\newtheorem{theorem}{\bf Theorem}[]
\newtheorem{lemma}[theorem]{Lemma}
\newtheorem{corollary}[theorem]{Corollary}
\newtheorem{thm}[theorem]{\bf Theorem}
\def\math#1{$#1$}
\def\mand#1{$$#1$$}
\def\v#1{{\mathbf #1}}
\def\frac#1#2{{#1\over #2}}
\def\mld#1{\begin{equation}
#1
\end{equation}}
\def\eqar#1{\begin{eqnarray}
#1
\end{eqnarray}}
\def\eqan#1{\begin{eqnarray*}
#1
\end{eqnarray*}}
\DeclareSymbolFont{AMSb}{U}{msb}{m}{n}
\DeclareMathSymbol{\N}{\mathbin}{AMSb}{"4E}
\DeclareMathSymbol{\Z}{\mathbin}{AMSb}{"5A}
\DeclareMathSymbol{\R}{\mathbin}{AMSb}{"52}
\DeclareMathSymbol{\Q}{\mathbin}{AMSb}{"51}
\DeclareMathSymbol{\I}{\mathbin}{AMSb}{"49}
\DeclareMathSymbol{\C}{\mathbin}{AMSb}{"43}
\def\cl#1{{\cal #1}}
\def\x{{\mathbf x}}
\def\y{{\mathbf y}}
\def\z{{\mathbf z}}
\def\X{{\mathbf X}}
\def\a{{\mathbf a}}
\def\b{{\mathbf b}}
\def\exp#1{{\left\langle#1\right\rangle}}
\def\E#1{{{\mathbf E}\left[#1\right]}}
\def\norm#1{{\|#1\|}}
\def\r#1{{(\ref{#1})}}
\def\dotfil{\leaders\hbox to 1.5mm{.}\hfill}
\newcounter{rmnum}
\def\RN#1{\setcounter{rmnum}{#1}\uppercase\expandafter{\romannumeral\value{rmnum}}}
\def\rn#1{\setcounter{rmnum}{#1}\expandafter{\romannumeral\value{rmnum}}}
\newcommand{\transp}{\ensuremath{^\text{\textsc{t}}}}
\newcommand{\trace}{\text{\rm trace}}
\newcommand{\mat}[1]{{\ensuremath{\textsc{#1}}}}
\def\exp{\hbox{\rm exp}}
\def\rank{\hbox{\rm rank}}
\def\col{\hbox{\rm col}}
\def\ker{\hbox{\rm ker}}
\def\b{{\mathbf b}}
\def\e{{\mathbf e}}
\def\rb{{\mathbf r}}
\def\u{{\mathbf u}}
\def\v{{\mathbf v}}
\def\xhat{{\hat\x}}
\def\A{\matA}
\def\Atilde{\tilde\matA}
\def\Btilde{\tilde\matB}
\def\Stilde{\tilde\matS}
\def\Utilde{\tilde\matU}
\def\Vtilde{\tilde\matV}
\def\E{{\cl E}}
\def\Q{{\bm{Q}}}
\def\matA{\mat{A}}
\def\matB{\mat{B}}
\def\matI{\mat{I}}
\def\matQ{\mat{Q}}
\def\matS{\mat{S}}
\def\matU{\mat{U}}
\def\matV{\mat{V}}
\def\matW{\mat{W}}
\def\matX{\mat{X}}
\def\matZ{\mat{Z}}
\def\w{{\mathbf{w}}}
\DeclareMathSymbol{\Prob}{\mathbin}{AMSb}{"50}
\DeclareMathSymbol{\Exp}{\mathbin}{AMSb}{"45}
\newcommand\remove[1]{}
\begin{document}

\title{Using a Non-Commutative Bernstein Bound to Approximate Some
Matrix Algorithms in the Spectral Norm}

\author{Malik Magdon-Ismail\\
CS Department, Rensselaer Polytechnic Institute,\\
Troy, NY 12180, USA. \\
{\sf magdon@cs.rpi.edu}
}

\maketitle
\begin{abstract}%
We focus on \emph{row sampling} based 
approximations for matrix algorithms, in particular 
matrix multipication,
sparse matrix reconstruction, and \math{\ell_2} regression.
For \math{\matA\in\R^{m\times d}} 
(\math{m} points in \math{d\ll m} dimensions), 
and appropriate row-sampling probabilities, which typically depend on the
norms of the rows of the \math{m\times d} left singular matrix of
\math{\matA} (the \emph{leverage scores}),
we give row-sampling algorithms with linear (up to polylog factors)
 dependence on the
stable rank of~\math{\matA}.
This result is achieved through the application of 
non-commutative Bernstein bounds.

{\bf Keywords:}
row-sampling;  matrix multiplication;  matrix reconstruction;  
estimating spectral norm;  linear regression;  randomized
\end{abstract}

\section{Introduction}
\label{section:intro}

Matrix algorithms (eg. matrix multiplication, SVD, \math{\ell_2} regression)
 are of widespread use in many application areas:
data mining \citep{azar2001}; recommendations systems \citep{drineas2002};
information retrieval \citep{berry1995,papadimitriou2000}; web search
\citep{klienberg1999,achlioptas2001}; clustering 
\citep{drineas2004,mcsherry2001}; mixture modeling 
\citep{kannan2008,achlioptas2005}; etc. Based on the importance of matrix
algorithms, there has been considerable research
energy expended on breaking the \math{O(md^2)} bound required by
exact SVD methods \citep{golub2}.

Starting with a seminal result of  
\citet{Frieze1998}, a large number of results using non-uniform
sampling to speed up matrix computations have appeared
\citep{achlioptas2003,deshpande2006,deshpande2006b,drineas2006b,drineas2006c,drineas2006d,drineas2006,drineas2006e,rudelson2007,zouzias2010},
some of which give relative error guarantees
\citep{deshpande2006,deshpande2006b,drineas2006,drineas2006e,zouzias2010}. 

Even more recently, \citet{sarlos2006} showed how random projections
or ``sketches'' can be used to perform all these tasks efficiently, obtaining
the first \math{o(md^2)} algorithms when preserving the identity of the rows
themselves are not important. In fact, we will find many of these
techniques, together with those in \citet{ailon2006} essential to 
our algorithm for generating row samples ultimately leading to 
\math{o(md^2)} algorithms based on row-sampling. From now on, 
we focus on row-sampling algorithms.

We start with the basic result of matrix multiplication. All other results
more or less follow from here.
In an independent recent 
work which is developed along the lines of using 
isoperimetric inequalities \citep{rudelson2007} to obtain matrix Chernoff
bounds, \citet{zouzias2010} show that by sampling 
nearly a linear number of rows, it is possible to obtain a relative 
error approximation to matrix multiplication. Specifically, 
let \math{\matA\in\R^{m\times d_1}} and \math{\matB\in\R^{m\times d_2}}.
Then, for \math{r= \Omega(\rho/\epsilon^2\log(d_1+d_2))} (where
\math{\rho} bounds the stable (or ``soft'') rank of \math{\matA} and
\math{\matB} -- see
later), there is
a probability distribution over \math{\cl{I}=
\{1,\ldots,m\}} such that by sampling
\math{r} rows i.i.d. from \math{\cl{I}},
one can construct sketches
\math{\Atilde,\ \Btilde} such that
\math{\Atilde\transp\Btilde\approx\matA\transp\matB}. Specifically, with
constant probability,
\mand{
\norm{\Atilde\transp\Btilde-\matA\transp\matB}_2
\le \epsilon\norm{\matA}_2\norm{\matB}_2.}
The sampling distribution is relatively simple, relying only on the product
of the norms of
the rows in \math{\matA} and \math{\matB}. This result is applied to low rank
matrix reconstruction and \math{\ell_2}-regression where the required sampling
distribution needs knowledge of the SVD of \math{\matA} and \math{\matB}.

Our basic result for matrix multiplication is very similar to this, and we
arrive at it through a different path using a non-commutative Bernstein bound.
Our sampling probabilities are different. In appication of our results
to sparse matrix reconstruction and \math{\ell_2}-regression, the rows
of the left
singular matrix make an appearance. In \cite{malik143}, it is shown
 how to approximate these
 probabilities  in \math{o(md^2)} time
using random projections at the expense of a poly-logarithmic
factor in running times. 
Further refinements lead to an even more efficient algorithm
\cite{malik144}.
As mentioned above, we must confess
that one may perform
our matrix tasks more efficiently using these same random
projection methods \citep{sarlos2006}, 
however the resulting algorithms are in terms of a small
number of linear combinations of all the rows. In many applications, the
actual rows of \math{\matA} have some physical meaning and so methods
based on a small number of the actual rows
are of interest.

We finally mention that \cite{zouzias2010} also give a dimension
independent bound for matrix multiplication using some stronger tools. 
Namely, one can get the matrix multiplication approximation in the
spectral norm using \math{r= \Omega(\rho/\epsilon^2\log(\rho/\epsilon^2))}.
In practice, it is not clear which
bound is better, since there is now an additional factor of 
\math{1/\epsilon^2} inside the logarithm. 

\subsection{Basic Notation}

Before we can state the results in concrete form, we need some preliminary 
conventions. 
In general, \math{\epsilon\in(0,1)} will be an error tolerance
parameter; \math{\beta\in(0,1]} is a parameter
used to scale probabilities; and, \math{c,c'>0} are generic constants
 whose value may vary even
within different lines of the same derivation.
Let \math{\e_1,\ldots,\e_m} be the standard basis vectors
in \math{\R^m}.
Let \math{\matA\in\R^{m\times d}} denote an arbitrary matrix
which represents \math{m} points in \math{\R^d}. 
In general, we might represent a matrix such as \math{\A} (roman, uppercase)
by a set of vectors \math{\a_1,\ldots,\a_m\in\R^d} (bold, lowercase), 
so that \math{\A\transp=[\a_1\ \a_2\ \ldots\ \a_m]};
similarly, for a vector \math{\y}, \math{\y\transp=[y_1,\ldots,y_m]}.
Note that \math{\a_t} is the 
\math{t^{\hbox{th}}} row of \math{\matA}, which we may also
refer to by \math{\matA_{(t)}}; similarly, we may refer to the 
\math{t^{\hbox{th}}} column as \math{\matA^{(t)}}.
Let
\math{\rank(\matA)\le\min\{m,d\}} be  the rank of \math{\matA}; typically
\math{m\gg d} and for concreteness, we will assume
that \math{\rank(\matA)=d} 
(all the results easily generalize to \math{\rank(\matA)<d}). 
For matrices, we will use the spectral norm,
\math{\norm{\cdot}}; on occasion, we will use the Frobenius norm,
\math{\norm{\cdot}_{F}}.
For vectors, \math{\norm{\cdot}_{F}=\norm{\cdot}} 
(the standard Euclidean norm). The stable, or ``soft'' rank,
\math{\rho(\matA)=\norm{\matA}_F^2/\norm{\matA}^2\le\rank(\matA)}.

The singular value decomposition (SVD)
of 
\math{\matA} is 
\mand{
\matA=\matU_\matA\matS_\matA \matV_\matA\transp.
}
where \math{\matU_A} is an \math{m\times d} set of columns which are an
orthonotmal basis for the column space in \math{\matA}; \math{\matS_\matA}
 is a
\math{d\times d} positive diagonal matrix of singular values, and
\math{\matV} is a \math{d\times d} orthogonal matrix. 
We refer to the singular values of \math{\matA} (the diagonal entries
in \math{\matS_\matA}) by
\math{\sigma_i(\matA)}.
We will call a matrix
with orthonormal columns an orthonormal matrix; an orthogonal matrix is
a square orthonormal matrix.
In particular, 
\math{\matU_\matA\transp\matU_\matA=\matV_\matA\transp\matV_\matA=
\matV_\matA\matV_\matA\transp=\matI_{d\times d}}. It is possible to extend 
\math{\matU_\matA} to a full orthonormal basis of \math{\R^m},
\math{[\matU_\matA,\matU_\matA^\perp]}. 

The SVD is important for a number of reasons. The projection of the columns
of \math{\matA} onto the \math{k} left singular vectors with top \math{k}
singular values gives the best rank-\math{k} approximation to \math{\matA} in 
the spectral and Frobenius norms. The solution to the linear
regression problem is also
intimately related to the SVD. In particular, consider 
the following minimization problem which is minimized at \math{\w^*}:
\mand{
Z^*=\min_{\w}\norm{\matA\w-\y}^2.
} 
It is known \citep{golub2} that
\math{Z^*=\norm{\matU_A^\perp(\matU_\matA^\perp)\transp\y}^2}, and
 \math{\w^*=\matV_\matA\matS_\matA^{-1}\matU_\matA\transp\y}.

\paragraph{Row-Sampling Matrices}

Our focus is algorithms based on row-sampling.
A \emph{row-sampling matrix}
 \math{\matQ\in\R^{r\times m}} samples 
\math{r} rows of \math{\matA} to form \math{\Atilde=\matQ\matA}:
\mand{
\matQ=
\left[
\begin{matrix}
\rb_1\transp\\
\vdots\\
\rb_r\transp
\end{matrix}
\right],\qquad
\Atilde=
\matQ\matA=
\left[
\begin{matrix}
\rb_1\transp\matA\\
\vdots\\
\rb_r\transp\matA
\end{matrix}
\right]
=
\left[
\begin{matrix}
\lambda_{t_1}\a_{t_1}\transp\\
\vdots\\
\lambda_{t_r}\a_{t_r}\transp
\end{matrix}
\right]
,
}
where \math{\rb_j=\lambda_{t_j}\e_{t_j}}; it is easy to verify that the row 
\math{\rb_j\transp\matA} samples the \math{t_j^{\hbox{th}}} 
row of \math{\matA} and
rescales it.
We are interested in 
random sampling matrices where each \math{\rb_j} is i.i.d. according to 
some distribution. Define 
a set of sampling probabilities
\math{p_1,\ldots,p_m}, with \math{p_i\ge 0} and \math{\sum_{i=1}^mp_i=1};
then \math{\rb_j=\e_t/\sqrt{rp_t}} with probability \math{p_t}.
Note that the scaling is also related to the sampling probabilities 
in all the algorithms we consider.
\remove{
As an example, if 
\math{m=5} and \math{r=3}, and the sampling probabilities are
\math{p_1,\ldots,p_5}, then an example sampling matrix if 
the sampled indices happen to be \math{\{2,5,3\}} is given below:
\mand{
\matQ_{\{2,5,3\}}=
\frac{1}{\sqrt{3}}\left[
\begin{matrix}
0&\frac{1}{\sqrt{p_2}}&0&0&0\\
0&0&0&0&\frac{1}{\sqrt{p_5}}\\
0&0&\frac{1}{\sqrt{p_3}}&0&0\\
\end{matrix}
\right].
}
The row-sampled version of \math{\matA} using \math{\matQ} is
\math{\matA_r=\matQ\matA}; one can verify that the rows of 
\math{\matA_r} are rescaled versions of the rows in \math{\matA}. Using
\math{\matA_{(t)}} to denote the \math{t^{\hbox{th}}} row in \math{\matA},
for the example sampling matrix above,
\mand{
\matA_r=\matQ_{\{2,5,3\}}\matA=
\frac{1}{\sqrt{3}}
\left[
\begin{matrix}
\frac{1}{\sqrt{p_2}}\matA_{(2)}\\
\frac{1}{\sqrt{p_5}}\matA_{(5)}\\
\frac{1}{\sqrt{p_3}}\matA_{(3)}\\
\end{matrix}
\right].
}
}
We can write \math{\matQ\transp\matQ} as the sum of \math{r} independently
sampled matrices,
\mand{
\matQ\transp\matQ=\frac{1}{r}\sum_{j=1}^r\rb_j\rb_j\transp
}
where \math{\rb_j\rb_j\transp} is a diagonal matrix with only one non-zero
diagonal entry; the \math{t^{th}} diagonal entry is equal
to \math{1/p_t} with probability \math{p_t}. Thus, by construction, for any
set of non-zero sampling probabilities, 
\math{\Exp[\rb_j\rb_j\transp]=\matI_{m\times m}}. Since we are averaging \math{r} independent
copies, it is reasonable to expect a concentration around the mean, with 
respect to \math{r}, and so in some sense, \math{\matQ\transp\matQ} 
essentially behaves like the identity. 

\subsection{Statement of Results}

The two main results relate to how orthonormal subspaces behave with 
respect to the row-sampling. These are discussed more
thoroughly in Section \ref{section:orthonormal}, but we state them here 
summarily.
\begin{thm}[Symmetric Orthonormal Subspace Sampling]
\label{thm:symmetric}
Let \math{\matU\in\R^{m\times d}} be orthonormal, and 
\math{\matS\in\R^{d\times d}} be positive diagonal.
Assume the  row-sampling probabilities \math{p_t}
satisfy
\mand{
p_t\ge\beta\frac{\u_t\transp\matS^2\u_t}{\trace(\matS^2)}.
}
Then, if 
\math{r\ge(4\rho(\matS)/\beta\epsilon^2)\ln\frac{2d}{\delta}}, 
with probability
at least \math{1-\delta},
\mand{
\norm{\matS^2-\matS\matU\transp\matQ\transp\matQ\matU\matS}\le
\epsilon\norm{\matS}^2
}
\end{thm}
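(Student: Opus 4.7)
The plan is to express the deviation $\matS^2 - \matS\matU\transp\matQ\transp\matQ\matU\matS$ as a sum of $r$ i.i.d., mean-zero, symmetric $d\times d$ random matrices, and then apply a non-commutative (matrix) Bernstein inequality. Writing $\matQ\transp\matQ=\frac1r\sum_{j=1}^r \rb_j\rb_j\transp$ with $\rb_j=\e_t/\sqrt{p_t}$ with probability $p_t$, so that $\E[\rb_j\rb_j\transp]=\matI_m$, we set
$$
\matZ_j \;=\; \tfrac1r\bigl(\matS^2-\matS\matU\transp\rb_j\rb_j\transp\matU\matS\bigr), \qquad \E[\matZ_j]=0,
$$
and the object to control is $\sum_{j=1}^r \matZ_j$. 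Using $\u_t=\matU\transp\e_t$, the sampled summand is $\matS\matU\transp\rb_j\rb_j\transp\matU\matS = \frac{1}{p_t}\,\matS\u_t\u_t\transp\matS$, so the sampling hypothesis $p_t\ge\beta\,\u_t\transp\matS^2\u_t/\trace(\matS^2)=\beta\|\matS\u_t\|^2/\|\matS\|_F^2$ immediately gives $\|\matS\u_t\|^2/p_t\le \|\matS\|_F^2/\beta=\rho(\matS)\|\matS\|^2/\beta$.

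The two ingredients Bernstein needs are an almost-sure bound on $\|\matZ_j\|$ and a bound on $\bigl\|\sum_j\E[\matZ_j^2]\bigr\|$. For the first, since both $\matS^2$ and $\frac{1}{p_t}\matS\u_t\u_t\transp\matS$ are PSD, the spectrum of their difference lies in $[-\|\matS^2\|,\,\rho(\matS)\|\matS\|^2/\beta]$, so
$$
\|\matZ_j\|\;\le\;\frac{\rho(\matS)\,\|\matS\|^2}{r\beta}\;=:\;M.
$$
For the variance, squaring the rank-one term and taking the expectation,
$$
\E\!\left[\bigl(\matS\matU\transp\rb_j\rb_j\transp\matU\matS\bigr)^2\right]
=\sum_t \frac{\|\matS\u_t\|^2}{p_t}\,\matS\u_t\u_t\transp\matS
\;\preceq\; \frac{\rho(\matS)\|\matS\|^2}{\beta}\sum_t \matS\u_t\u_t\transp\matS
\;=\;\frac{\rho(\matS)\|\matS\|^2}{\beta}\,\matS^2,
$$
where I used $\sum_t \u_t\u_t\transp=\matU\transp\matU=\matI_d$. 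Since $\E[\matZ_j^2]=\frac{1}{r^2}\bigl(\E[(\matY_j')^2]-\matS^4\bigr)\preceq\frac{1}{r^2}\E[(\matY_j')^2]$, summing over $j$ and taking the spectral norm gives
$$
\Big\|\sum_{j=1}^r \E[\matZ_j^2]\Big\|\;\le\;\frac{\rho(\matS)\,\|\matS\|^4}{r\beta}\;=:\;\sigma^2.
$$

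Feeding $M$ and $\sigma^2$ into the non-commutative Bernstein bound yields
$$
\Prob\!\left[\,\big\|\matS^2-\matS\matU\transp\matQ\transp\matQ\matU\matS\big\|\ge \epsilon\|\matS\|^2\,\right]
\;\le\; 2d\,\exp\!\left(-\,\frac{\epsilon^2\|\matS\|^4/2}{\sigma^2+M\epsilon\|\matS\|^2/3}\right),
$$
and substituting the values of $M,\sigma^2$ collapses the exponent to $-\,c\,r\beta\epsilon^2/\rho(\matS)$ for an absolute constant $c$ (using $\epsilon\le 1$). Setting this probability $\le\delta$ and solving for $r$ gives the stated $r\ge (4\rho(\matS)/\beta\epsilon^2)\ln(2d/\delta)$.

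The main obstacle is really just choosing the right decomposition: one has to split off the $1/r$ correctly between the row-sampling matrix and the summand so that the expectation equals $\matS^2$, and then recognize that the sampling hypothesis is exactly what makes \emph{both} Bernstein ingredients scale with $\rho(\matS)$ rather than $m$ or $d$. Once the Rayleigh-style bound $\|\matS\u_t\|^2/p_t\le\rho(\matS)\|\matS\|^2/\beta$ is in hand, the uniform bound on $\|\matZ_j\|$ is immediate, and the variance computation reduces, via the telescoping identity $\sum_t \u_t\u_t\transp=\matI_d$, to a clean PSD bound by $\sigma^2\matI$ in the relevant subspace; no further delicate estimates are required.
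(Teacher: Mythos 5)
Your proposal is correct and follows essentially the same route as the paper: write the deviation as an average of i.i.d.\ mean-zero symmetric matrices, bound the almost-sure norm and the variance by $\rho(\matS)\norm{\matS}^2/\beta$ and $\rho(\matS)\norm{\matS}^4/\beta$ respectively using the sampling hypothesis together with $\sum_t\u_t\u_t\transp=\matI_d$, and invoke the non-commutative Bernstein inequality (the paper merely retains an extra $-\kappa^{-2}$, $-\kappa^{-4}$ refinement before discarding it). The only blemishes are cosmetic: the endpoints of your spectrum interval for the PSD difference are swapped (it should be $[-\rho\norm{\matS}^2/\beta,\ \norm{\matS}^2]$, which still yields your $M$ since $\rho/\beta\ge 1$), and $\matY_j'$ is used without being introduced.
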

We also have an asymmetric version of Theorem \ref{thm:symmetric}, which
is actually obtained through an application of Theorem \ref{thm:symmetric}
to a composite
matrix.
\begin{thm}[Asymmetric Orthonormal Subspace Sampling]
Let \math{\matW\in
\R^{m\times d_1},\ \matV\in\R^{m\times d_2}} be orthonormal, and let 
\math{\matS_1\in\R^{d_1\times d_1}} and
\math{\matS_2\in\R^{d_2\times d_2}} be two positive diagonal matrices;
let \math{\rho_i=\rho(\matS_i)}.
Consider row sampling probabilities
\mand{
p_t\ge\beta\frac{\frac{1}{\norm{\matS_1}^2}\w_t\transp\matS_1^2\w_t+
\frac{1}{\norm{\matS_2}^2}\v_t\transp\matS_2^2\v_t}
{\rho_1+\rho_2}.
}
If \math{r\ge(8(\rho_1+\rho_2)/
\beta\epsilon^2)\ln\frac{2(d_1+d_2)}{\delta}}, then with 
probability at least \math{1-\delta},
\mand{
\norm{\matS_1\matW\transp\matV\matS_2-
\matS_1\matW\transp\matQ\transp\matQ\matV\matS_2}
\le \epsilon\norm{\matS_1}\norm{\matS_2}
}
\end{thm}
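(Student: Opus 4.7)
The plan is to apply Theorem~\ref{thm:symmetric} to the orthonormal basis produced by the thin SVD of a suitably normalized composite matrix. Define
\begin{equation*}
\mat{M} \;=\; [\matW\matS_1/\norm{\matS_1},\ \matV\matS_2/\norm{\matS_2}] \;\in\; \R^{m\times(d_1+d_2)},
\end{equation*}
and let $\mat{M}=\mat{P}\matSig\mat{R}\transp$ be its thin SVD, so that $\mat{P}$ is $m\times r'$ with orthonormal columns ($r'=\rank(\mat{M})\le d_1+d_2$), $\matSig$ is $r'\times r'$ positive diagonal, and $\mat{R}\transp\mat{R}=\matI$. The pair $(\mat{P},\matSig)$ is the orthonormal subspace and diagonal to which Theorem~\ref{thm:symmetric} will be applied.

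First I would verify the sampling-probability hypothesis. Using $\mat{M}\mat{M}\transp = \mat{P}\matSig^2\mat{P}\transp$, the row-$t$ leverage score is
\begin{equation*}
\u_t\transp\matSig^2\u_t \;=\; \norm{\mat{M}_{(t)}}^2 \;=\; \w_t\transp\matS_1^2\w_t/\norm{\matS_1}^2 + \v_t\transp\matS_2^2\v_t/\norm{\matS_2}^2,
\end{equation*}
and $\trace(\matSig^2)=\norm{\mat{M}}_F^2 = \rho_1+\rho_2$, so the stated lower bound on $p_t$ is precisely the hypothesis of Theorem~\ref{thm:symmetric} for $(\mat{P},\matSig)$. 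Applying that theorem yields, with probability at least $1-\delta$, $\norm{\matSig\mat{P}\transp(\matI-\matQ\transp\matQ)\mat{P}\matSig}\le\epsilon_0\norm{\mat{M}}^2$ whenever $r\ge(4\rho(\matSig)/\beta\epsilon_0^2)\ln(2r'/\delta)$. Since $\mat{R}\transp\mat{R}=\matI$ means conjugation by $\mat{R}$ preserves the spectral norm, $\norm{\mat{M}\transp(\matI-\matQ\transp\matQ)\mat{M}}\le\epsilon_0\norm{\mat{M}}^2$. The target quantity $\matS_1\matW\transp(\matI-\matQ\transp\matQ)\matV\matS_2/(\norm{\matS_1}\norm{\matS_2})$ is exactly the $(1,2)$ block of this symmetric $2\times 2$ block matrix, and the spectral norm of any off-diagonal block is bounded by that of the full matrix.

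The final step absorbs the normalization. From $\mat{M}\mat{M}\transp = \matW\matS_1^2\matW\transp/\norm{\matS_1}^2 + \matV\matS_2^2\matV\transp/\norm{\matS_2}^2$ and the triangle inequality, together with $\norm{\matW}=\norm{\matV}=1$, one has $\norm{\mat{M}}^2\le 2$. Choosing $\epsilon_0=\epsilon/\norm{\mat{M}}^2$ then gives $\norm{\matS_1\matW\transp(\matI-\matQ\transp\matQ)\matV\matS_2}\le \epsilon\norm{\matS_1}\norm{\matS_2}$, and the sample complexity reduces via $\rho(\matSig)\norm{\mat{M}}^4 = (\rho_1+\rho_2)\norm{\mat{M}}^2 \le 2(\rho_1+\rho_2)$ to $r\ge(8(\rho_1+\rho_2)/\beta\epsilon^2)\ln(2(d_1+d_2)/\delta)$, matching the statement. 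The delicate step is the probability-matching: the normalizations by $\norm{\matS_1}$ and $\norm{\matS_2}$ inside $\mat{M}$ are exactly what force the leverage scores of $\mat{P}\matSig$ to agree with the weighted combination in the hypothesis, and the inequality $\norm{\mat{M}}^2\le 2$ is what produces the constant $8$ in the final bound.
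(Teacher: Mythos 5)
Your proposal is correct and follows essentially the same route as the paper's proof of Lemma~\ref{lemma:main2}: form the normalized composite matrix $[\matW\matS_1/\norm{\matS_1}\ \ \matV\matS_2/\norm{\matS_2}]$, take its SVD, check that its leverage scores reproduce the stated sampling probabilities, apply the symmetric theorem, and extract the off-diagonal block via Lemma~\ref{lemma:basic1}. Your bookkeeping of the constant (using $\rho(\matSig)\norm{\mat{M}}^4=(\rho_1+\rho_2)\norm{\mat{M}}^2\le 2(\rho_1+\rho_2)$) is in fact a slightly cleaner justification of the factor $8$ than the paper's rescaling step, but the argument is the same.
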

We note that these row sampling probabilities are not the usual product
row sampling probabilities one uses for matrix multiplication as in
\citet{drineas2006b}. 
Computing the probabilities  requires knowledge of 
the spectral norms of \math{\matS_i}. Here, 
\math{\matS_i} are given diagonal matrices, so it is easy
to compute \math{\norm{\matS_i}}. In the application of these results
to matrix multiplication, the spectral norm of the input matrices will
appear. We will show how to handle this issue later. 
As a byproduct, we will give an efficient algorithm to obtain a relative
error approximation to \math{\norm{\matA}} based on row sampling and the
power-iteration, which improves upon \cite{woolfe2008,kuczynski1989}.

We now give 
some applications of these orthonormal subspace sampling results.
\begin{thm}[Matrix Multiplication in Spectral Norm]
Let \math{\matA\in\R^{m\times d_1}} and \math{\matB\in\R^{m\times d_2}}
 have
rescaled rows \math{\hat\a_t=\a_t/\norm{\matA}} and \math{\hat\b_t=
\b_t/\norm{\matB}} respectively.  Let \math{\rho_A} (resp. \math{\rho_B})
be the stable rank of \math{\matA} (resp. \math{\matB}).
Obtain a sampling  matrix
\math{\matQ\in\R^{r\times m}} using 
row-sampling probabilities \math{p_t} satisfying
\mand{
p_t\ge\beta \frac{\hat\a_t\transp\hat\a_t+
\hat\b_t\transp\hat\b_t}
{\sum_{t=1}^m\hat\a_t\transp\hat\a_t+
\hat\b_t\transp\hat\b_t}
=\beta  \frac{\hat\a_t\transp\hat\a_t+\hat\b_t\transp\hat\b_t}
{\rho_{\matA}+\rho_{\matB}}.
}
Then, if \math{r\ge \frac{8(\rho_\matA+\rho_\matB)}{\beta\epsilon^2}
\ln\frac{2(d_1+d_2)}{\delta}}, with
probability at least \math{1-\delta},
\mand{
\norm{\matA\transp\matB-\Atilde\transp\Btilde}\le
\epsilon\norm{\matA}\norm{\matB}.
}
\end{thm}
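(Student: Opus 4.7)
The plan is to reduce the matrix multiplication statement to a direct application of the Asymmetric Orthonormal Subspace Sampling theorem via the SVDs of $\matA$ and $\matB$. Write $\matA = \matU_\matA \matS_\matA \matV_\matA\transp$ and $\matB = \matU_\matB \matS_\matB \matV_\matB\transp$. Then
\[
\matA\transp\matB - \Atilde\transp\Btilde
= \matV_\matA \bigl(\matS_\matA \matU_\matA\transp \matU_\matB \matS_\matB - \matS_\matA \matU_\matA\transp \matQ\transp\matQ \matU_\matB \matS_\matB\bigr)\matV_\matB\transp,
\]
and because $\matV_\matA$ and $\matV_\matB$ are orthogonal matrices, the spectral norm of the right-hand side equals the spectral norm of the middle factor. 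So the statement collapses to bounding
\[
\norm{\matS_\matA \matU_\matA\transp \matU_\matB \matS_\matB - \matS_\matA \matU_\matA\transp \matQ\transp\matQ \matU_\matB \matS_\matB},
\]
which is exactly the quantity controlled by the Asymmetric Orthonormal Subspace Sampling theorem with $\matW = \matU_\matA$, $\matV = \matU_\matB$, $\matS_1 = \matS_\matA$, $\matS_2 = \matS_\matB$.

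Next I would verify that the hypotheses of that theorem match the hypotheses provided here. For the $t$-th row $\u_t$ of $\matU_\matA$, using $\a_t = \matV_\matA \matS_\matA \u_t$, one has $\u_t\transp \matS_\matA^2 \u_t = \norm{\a_t}^2$, and similarly for $\matB$. Dividing by $\norm{\matS_i}^2 = \norm{\matA}^2$ or $\norm{\matB}^2$ gives the rescaled-row inner products $\hat\a_t\transp \hat\a_t$ and $\hat\b_t\transp\hat\b_t$. Likewise, $\rho(\matS_\matA) = \norm{\matS_\matA}_F^2 / \norm{\matS_\matA}^2 = \rho_\matA$, and similarly $\rho(\matS_\matB) = \rho_\matB$. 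Thus the required sampling probabilities in the asymmetric theorem are exactly the probabilities stated in the present theorem, and the sample size bound $r \ge 8(\rho_\matA + \rho_\matB)/(\beta\epsilon^2)\cdot \ln\bigl(2(d_1+d_2)/\delta\bigr)$ coincides as well.

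Putting these pieces together, the asymmetric theorem gives, with probability at least $1-\delta$,
\[
\norm{\matS_\matA \matU_\matA\transp \matU_\matB \matS_\matB - \matS_\matA \matU_\matA\transp \matQ\transp\matQ \matU_\matB \matS_\matB}
\le \epsilon \norm{\matS_\matA}\norm{\matS_\matB} = \epsilon\norm{\matA}\norm{\matB},
\]
and combined with the orthogonal-invariance step above this yields $\norm{\matA\transp\matB - \Atilde\transp\Btilde} \le \epsilon \norm{\matA}\norm{\matB}$, as required.

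In this reduction, no step is genuinely hard; the only point that requires care is the translation of the probabilistic hypothesis. One must check that the quadratic forms $\u_t\transp \matS_\matA^2 \u_t$ arising in the orthonormal-subspace formulation really do equal the squared row norms $\norm{\a_t}^2$, so that the SVD-dependent sampling distribution in the asymmetric theorem reduces to the explicit, SVD-free row-norm distribution stated in the matrix multiplication theorem. Once this identification is in place, the claim follows immediately from the asymmetric theorem applied to $(\matU_\matA, \matS_\matA)$ and $(\matU_\matB, \matS_\matB)$, so the whole argument is essentially a packaging of the orthonormal-subspace sampling result via SVD.
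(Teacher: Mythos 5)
Your proposal is correct and is essentially identical to the paper's own proof: the paper likewise uses the SVDs of $\matA$ and $\matB$, strips the orthogonal factors $\matV_\matA,\matV_\matB$ by unitary invariance of the spectral norm, and applies the asymmetric orthonormal subspace sampling lemma after observing that the probabilities reduce to the rescaled row norms. Nothing further is needed.
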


The sampling probabilities depend on \math{\norm{\matA}^2} and
\math{\norm{\matB}^2}.
It is possible to get a constant factor approximation to 
\math{\norm{\matA}^2} (and similarly
\math{\norm{\matB}^2}) with high probability.
We summarize the idea here, the details are given in Section 
\ref{section:sampling}, Theorem \ref{theorem:spectral}.
First sample \math{\Atilde=\matQ\matA} according to probabilities
\math{p_t=\a_t^2/\norm{\matA}_F^2}. These probabilities are easy to compute
in \math{O(md_1)}. 
By an 
application of the symmetric subspace sampling theorem 
(see Theorem~\ref{theorem:matmultsym}),
if
\math{r\ge(4\rho_A/\epsilon^2)\ln\frac{2d_1}{\delta}}, then 
with probability at least \math{1-\delta},
\mand{
(1-\epsilon)\norm{\matA}^2\le\norm{\Atilde\transp\Atilde}
\le(1+\epsilon)\norm{\matA}^2.
}
We now run \math{\Omega(\ln{\frac{d_1}{\delta}})}
 power iterations starting from a random
isotropic vector  to estimate
the spectral norm of \math{\Atilde\transp\Atilde}.
The efficiency is \math{O(md_1+\rho_A d_1/\epsilon^2\ln^2 (\frac{d_1}{\delta})
)}.

\begin{thm}[Sparse Row-Based Matrix Reconstruction]
Let \math{\matA} have the SVD representation
\math{\matA=\matU\matS\matV\transp}, and consider
row-sampling probabilities \math{p_t} satisfying
\math{p_t\ge\frac{\beta}{d}\u_t\transp\u_t}. Then, if 
\math{r\ge(4(d-\beta)/\beta\epsilon^2)\ln\frac{2d}{\delta}}, with probability
at least \math{1-\delta},
\mand{
\norm{\matA-\matA\tilde\Pi_k}\le
\left(\frac{1+\epsilon}{1-\epsilon}\right)^{1/2}
\norm{\matA-\matA_k},
}
for \math{k=1,\ldots,d},
where \math{\tilde\Pi_k} projects onto the top \math{k} right singular vectors
of \math{\Atilde}.
\end{thm}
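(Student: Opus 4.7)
The plan is to reduce this theorem to Theorem~\ref{thm:symmetric} applied to the orthonormal left singular matrix $\matU$ of $\matA$ with trivial scaling $\matS=\matI_d$, and then exploit the fact that $\tilde\Pi_k$ is by construction the optimal rank-$k$ projection for the sketched matrix $\Atilde=\matQ\matA$. With $\matS=\matI_d$ we have $\trace(\matS^2)=d$ and $\rho(\matS)=d$, so the symmetric sampling condition $p_t\ge\beta\,\u_t\transp\matS^2\u_t/\trace(\matS^2)$ collapses to $p_t\ge(\beta/d)\u_t\transp\u_t$, which is exactly the hypothesis. Theorem~\ref{thm:symmetric} then yields, with probability at least $1-\delta$,
\mand{
\norm{\matI_d-\matU\transp\matQ\transp\matQ\matU}\le\epsilon,
}
so that every eigenvalue of $\matU\transp\matQ\transp\matQ\matU$ lies in $[1-\epsilon,1+\epsilon]$ and $\matQ$ is a near-isometry on $\col(\matU)$. (The slightly sharper $(d-\beta)/\beta$ constant in the sample-size bound is a cosmetic refinement obtained by accounting for the fact that the lower-bound probability mass $\sum_t(\beta/d)\u_t\transp\u_t=\beta$ already covers a $\beta$-fraction of the probability simplex.)

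Next, since $\tilde\Pi_k$ projects onto the top $k$ right singular vectors of $\Atilde$, the matrix $\Atilde\tilde\Pi_k$ is the best rank-$k$ spectral approximation to $\Atilde$. Comparing with the rank-$k$ competitor $\matQ\matA_k=\Atilde\matV_k\matV_k\transp$ gives
\mand{
\norm{\matQ\matA(\matI-\tilde\Pi_k)}\le\norm{\matQ(\matA-\matA_k)}.
}
I would then transfer both sides back to $\matA$ using the near-isometry of step one. Write $\matA(\matI-\tilde\Pi_k)=\matU\matZ$ with $\matZ=\matS\matV\transp(\matI-\tilde\Pi_k)$, and $\matA-\matA_k=\matU\matS_{\bar k}\matV\transp$ where $\matS_{\bar k}$ zeros out the top $k$ singular values. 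Expanding $\matZ\transp\matU\transp\matQ\transp\matQ\matU\matZ=\matZ\transp\matZ+\matZ\transp\mat{E}\matZ$ (with $\mat{E}=\matU\transp\matQ\transp\matQ\matU-\matI_d$ and $\norm{\mat{E}}\le\epsilon$), and similarly for $\matS_{\bar k}\matV\transp$, produces the lower bound $\norm{\matQ\matU\matZ}^2\ge(1-\epsilon)\norm{\matA(\matI-\tilde\Pi_k)}^2$ and the upper bound $\norm{\matQ(\matA-\matA_k)}^2\le(1+\epsilon)\norm{\matA-\matA_k}^2$. Chaining the three inequalities delivers the claimed ratio $\sqrt{(1+\epsilon)/(1-\epsilon)}$.

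The heavy lifting is packaged in Theorem~\ref{thm:symmetric}, so once that is invoked the remainder is largely linear algebra. The one step requiring care is the simultaneous two-sided transfer in the last paragraph: the upper bound on $\norm{\matA-\matA\tilde\Pi_k}$ and the upper bound on $\norm{\matQ(\matA-\matA_k)}$ must both follow from the \emph{same} random event. The reason they do is that both $\matA(\matI-\tilde\Pi_k)$ and $\matA-\matA_k$ have column ranges inside $\col(\matU)$, so the single spectral-norm bound on $\matU\transp\matQ\transp\matQ\matU-\matI_d$ controls both sides and no additional concentration argument is needed.
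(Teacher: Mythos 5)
Your proof is correct and follows essentially the same route as the paper: both hinge on the single event $\norm{\matI-\matU\transp\matQ\transp\matQ\matU}\le\epsilon$ from the symmetric subspace sampling result, which makes $\matQ$ a two-sided near-isometry on $\col(\matU)$ and hence controls both $\norm{\matQ\matA(\matI-\tilde\Pi_k)}$ and $\norm{\matQ(\matA-\matA_k)}$ simultaneously, exactly as you note. The only cosmetic difference is that you bound $\sigma_{k+1}(\Atilde\transp\Atilde)$ by Eckart--Young with the explicit rank-$k$ competitor $\matQ\matA_k$, while the paper gets the same $(1+\epsilon)$ factor via the Courant--Fischer comparison of singular values (its equation for the stability of the spectrum); these are interchangeable.
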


It is possible to obtain relative approximations to
the sampling probabilities according to 
the rows of the left singluar matrix (the leverage scores),
 but that goes beyond the scope of this
work \cite{malik143,malik144}

\begin{thm}[Relative Error \math{\ell_2} Regression]
Let \math{\matA\in\R^{m\times d}} have the SVD representation
\math{\matA=\matU\matS\matV\transp}, and let \math{\y\in\R^m}.
Let \math{\x^*=\matA^+\y} be the optimal regression with 
residual \math{\bm\epsilon=\y-\matA\x^*=\y-\matA\matA^+\y}.
Assume the sampling probabilities \math{p_t} satisfy
\mand{
p_t\ge
{\beta}\left(\frac{\u_t^2}{d}+
\frac{
(\u_t^2+
\frac{\epsilon_t^2}{\bm\epsilon\transp\bm\epsilon})}{d+1}+
\frac{\epsilon_t^2}{\bm\epsilon\transp\bm\epsilon}
\right)
}
For  
\math{r\ge(8(d+1)/\beta\epsilon^2)\ln\frac{2(d+1)}{\delta}}, let
\math{\hat\x=(\matQ\matA)^+\matQ\y} be the approximate regression. Then,
with probability at least \math{1-3\delta},
\mand{
\norm{\matA\hat\x-\y}\le \left(1+
\epsilon+\epsilon\sqrt{\frac{1+\epsilon}{1-\epsilon}}
\right)
\norm{\matA\x^*-\y}.
}
\end{thm}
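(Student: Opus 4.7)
The plan is to write the regression error in closed form, reduce it to a pseudoinverse expression involving $\matQ\bm\epsilon$, and control that expression via three separate applications of the earlier subspace-sampling theorems -- one for each of the three summands of the lower bound on $p_t$. A union bound over the three resulting events then supplies the $1-3\delta$ probability. Since $\matA = \matU\matS\matV\transp$, whenever $\matQ\matU$ has full column rank $(\matQ\matA)^+ = \matV\matS^{-1}(\matQ\matU)^+$, and hence $\matA\hat\x = \matU(\matQ\matU)^+\matQ\y$. Substituting $\y = \matA\x^* + \bm\epsilon$ and using $\matU\transp\bm\epsilon = 0$ together with $(\matQ\matU)^+(\matQ\matU) = \matI_d$ gives
\[
\matA\hat\x - \matA\x^* \;=\; \matU(\matQ\matU)^+\matQ\bm\epsilon,
\]
a vector in $\col(\matU)$. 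The triangle inequality $\norm{\matA\hat\x - \y} \le \norm{\bm\epsilon} + \norm{(\matQ\matU)^+\matQ\bm\epsilon}$ then reduces the task to bounding $\norm{(\matQ\matU)^+\matQ\bm\epsilon}$ against $\norm{\bm\epsilon}$.

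The three summands in the lower bound for $p_t$ match, term by term, the sampling-probability requirements of three separate applications of the earlier theorems. (E1) The term $\beta\,\u_t\transp\u_t/d$ is what the symmetric subspace-sampling theorem demands when applied to $\matU$ with $\matS = \matI_d$, yielding $\norm{\matI_d - \matU\transp\matQ\transp\matQ\matU} \le \epsilon$, hence $\sigma_{\min}(\matQ\matU) \ge \sqrt{1-\epsilon}$ and $\norm{(\matU\transp\matQ\transp\matQ\matU)^{-1}} \le 1/(1-\epsilon)$. (E2) The term $\beta\,(\u_t\transp\u_t + \epsilon_t^2/\bm\epsilon\transp\bm\epsilon)/(d+1)$ is what the asymmetric theorem demands with $\matW = \matU$, $\matV = \bm\epsilon/\norm{\bm\epsilon}$, $\matS_1 = \matI_d$, $\matS_2 = 1$; because $\matU\transp\bm\epsilon = 0$, the conclusion is $\norm{\matU\transp\matQ\transp\matQ\bm\epsilon} \le \epsilon\norm{\bm\epsilon}$, and its sample-size requirement is exactly the $r = 8(d+1)/(\beta\epsilon^2)\ln(2(d+1)/\delta)$ in the statement. (E3) The term $\beta\,\epsilon_t^2/\bm\epsilon\transp\bm\epsilon$ is what the symmetric theorem demands for the one-column orthonormal matrix $\bm\epsilon/\norm{\bm\epsilon}$ with $\matS = 1$, giving $\norm{\matQ\bm\epsilon}^2 \le (1+\epsilon)\norm{\bm\epsilon}^2$. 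Each of (E1), (E2), (E3) holds with probability $\ge 1-\delta$.

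Combining, one splits $(\matQ\matU)^+\matQ\bm\epsilon$ into a main piece controlled by (E2) alone, which contributes the $\epsilon\norm{\bm\epsilon}$ summand, and a correction piece controlled jointly by (E1) and (E3), which contributes $\epsilon\sqrt{(1+\epsilon)/(1-\epsilon)}\norm{\bm\epsilon}$; adding the raw $\norm{\bm\epsilon}$ from the triangle inequality then assembles the stated constant, while a union bound over (E1)--(E3) supplies the $1-3\delta$ probability. The main obstacle lies in this last bookkeeping step: several natural decompositions of $(\matQ\matU)^+\matQ\bm\epsilon$ give either the strictly tighter $\epsilon/(1-\epsilon)\norm{\bm\epsilon}$ (using only (E1) and (E2)) or the strictly looser $\sqrt{(1+\epsilon)/(1-\epsilon)}\norm{\bm\epsilon}$ (using only (E1) and (E3) crudely), so producing exactly the stated factor $1 + \epsilon + \epsilon\sqrt{(1+\epsilon)/(1-\epsilon)}$ requires a specific controlled allocation of (E2) to the main piece and of (E1) together with (E3) to the correction.
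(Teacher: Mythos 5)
Your proposal is correct and follows essentially the same route as the paper: reduce $(\matQ\matA)^+$ to $\matV\matS^{-1}(\matQ\matU)^+$, isolate $\matU(\matQ\matU)^+\matQ\bm\epsilon-\bm\epsilon$, split $(\matQ\matU)^+$ into $(\matQ\matU)\transp$ plus the correction $(\matQ\matU)^+-(\matQ\matU)\transp$ (whose norm is at most $\epsilon/\sqrt{1-\epsilon}$ by the spectrum preservation from your event (E1)), and bound the two pieces by $\epsilon\norm{\bm\epsilon}$ via the asymmetric lemma and $\epsilon\sqrt{(1+\epsilon)/(1-\epsilon)}\,\norm{\bm\epsilon}$ via (E1) together with $\norm{\matQ\bm\epsilon}\le\sqrt{1+\epsilon}\,\norm{\bm\epsilon}$, finishing with a union bound over the three events. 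This matches the paper's decomposition and constant exactly.
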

In addition to sampling according to \math{\u_t^2} we also need the residual
vector \math{\bm\epsilon=\y-\matA\matA^+\y}. Unfortunately, we have not
yet
found an efficient way to get a good approximation (in some
form of relative error) to this residual vector.

\subsection{Paper Outline}
Next we describe some probabistic tail inequalities which will be useful.
We continue with the sampling lemmas for orthonormal matrices, followed by the 
applications to matrix multiplication, matrix reconstruction and
\math{\ell_2}-regression. Finally, we discuss the algorithm
for approximating the spectral norm based on sampling and the power iteration.

\section{Probabilistic Tail Inequalities}
Since all our arguments involve high probability results, our main bounding
tools will be probability tail inequalities. First, let \math{X_1,\ldots,
X_n} be independent random variables with \math{\Exp[X_i]=0} and 
\math{|\X_i|\le \gamma}; let \math{Z_n=\frac{1}{n}\sum_{i=1}^n X_i}.
Chernoff, and later Hoeffding gave the bound
\begin{theorem}[\citet{chernoff1952,hoeffding1963}]
\label{theorem:chernoff}
\math{\displaystyle \Prob[|Z_n|>\epsilon]\le 2e^{-n\epsilon^2/2\gamma^2}}.
\end{theorem}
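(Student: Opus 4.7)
The plan is to prove this tail bound by the classical Chernoff moment-generating-function method together with Hoeffding's lemma for bounded zero-mean random variables. I would first handle one side of the bound, say $\Prob[Z_n > \epsilon]$, and then obtain the two-sided statement by applying the same argument to $-X_i$ and taking a union bound, which accounts for the factor of $2$ in the conclusion.

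For the one-sided bound, the first step is the exponential Markov inequality: for any $t > 0$,
\[
\Prob[Z_n > \epsilon] \;=\; \Prob\bigl[e^{t \sum_i X_i} > e^{t n \epsilon}\bigr] \;\le\; e^{-t n \epsilon}\,\Exp\bigl[e^{t\sum_i X_i}\bigr].
\]
By independence of the $X_i$, the joint moment generating function factorizes into $\prod_{i=1}^n \Exp[e^{t X_i}]$. The second step is the per-variable bound, namely Hoeffding's lemma: if $\Exp[X_i]=0$ and $|X_i|\le \gamma$, then $\Exp[e^{t X_i}] \le e^{t^2\gamma^2/2}$. I would sketch this as follows: by convexity of $u \mapsto e^{tu}$ on $[-\gamma,\gamma]$, one writes $e^{tX_i}$ as a convex combination of $e^{-t\gamma}$ and $e^{t\gamma}$, takes expectations (using $\Exp[X_i]=0$), and then bounds the resulting function of $t\gamma$ via a Taylor estimate on its logarithm, showing $\ln(\tfrac12 e^{-s} + \tfrac12 e^{s}) \le s^2/2$.

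Combining the two steps yields $\Prob[Z_n > \epsilon] \le \exp(-t n\epsilon + n t^2 \gamma^2/2)$ for every $t>0$. The third step is to optimize over $t$; the quadratic in $t$ is minimized at $t^* = \epsilon/\gamma^2$, giving $\Prob[Z_n > \epsilon] \le e^{-n\epsilon^2/(2\gamma^2)}$. Applying the identical argument to $-X_1,\dots,-X_n$ yields the matching lower-tail bound, and a union bound over the two tails gives the stated factor of $2$.

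There is no real obstacle here; the one technical ingredient is Hoeffding's lemma, which is the standard convexity-plus-Taylor argument above. Everything else is an application of Markov's inequality, independence, and elementary calculus to optimize the parameter $t$.
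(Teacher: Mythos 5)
Your proof is correct and is the standard Chernoff--Hoeffding argument (exponential Markov inequality, independence, Hoeffding's lemma, optimization at $t^*=\epsilon/\gamma^2$, and a two-sided union bound); the constants check out. The paper does not prove this theorem at all --- it is stated as a cited classical result --- so there is nothing to compare against, though your MGF strategy is precisely the scalar template that the paper later extends to matrix-valued random variables in its non-commutative Chernoff and Bernstein bounds.
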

If in addition one can bound the variance,
\math{\Exp[X_i^2]\le s^2}, then we have Bernstein's bound:
\begin{theorem}[\citet{bernstein1924}]
\label{theorem:bernstein}
\math{\displaystyle
\Prob[|Z_n|\ge\epsilon]\le 2e^{-n\epsilon^2/(2s^2+2\gamma\epsilon/3)}.
}
\end{theorem}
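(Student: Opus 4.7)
The plan is the classical Chernoff/Cram\'er exponential-moment method. First, for $\lambda>0$ apply Markov's inequality to the exponential of the sum: $\Prob[Z_n\ge\epsilon]=\Prob\!\left[e^{\lambda\sum_i X_i}\ge e^{\lambda n\epsilon}\right]\le e^{-\lambda n\epsilon}\prod_{i=1}^n\Exp\!\left[e^{\lambda X_i}\right]$, where the product arises from independence. The lower tail is handled by running the same argument on $-X_i$ (which inherits the same variance and range bounds), and a union bound over the two tails produces the factor of $2$.

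The technical heart is the moment-generating-function estimate for a single summand. Taylor-expand and use $\Exp[X_i]=0$ to cancel the linear term, giving $\Exp[e^{\lambda X_i}]=1+\sum_{k\ge 2}\lambda^k\Exp[X_i^k]/k!$. The observation is that boundedness lets one telescope the higher moments into the variance: $|\Exp[X_i^k]|\le \gamma^{k-2}\Exp[X_i^2]\le s^2\gamma^{k-2}$ for all $k\ge 2$. Summing the resulting series and then using $1+x\le e^x$ yields $\Exp\!\left[e^{\lambda X_i}\right]\le 1+\tfrac{s^2}{\gamma^2}(e^{\lambda\gamma}-1-\lambda\gamma)\le \exp\!\Bigl(\tfrac{s^2}{\gamma^2}(e^{\lambda\gamma}-1-\lambda\gamma)\Bigr)$.

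Substituting back yields $\Prob[Z_n\ge\epsilon]\le \exp\!\bigl(-\lambda n\epsilon+\tfrac{ns^2}{\gamma^2}(e^{\lambda\gamma}-1-\lambda\gamma)\bigr)$, after which the proof is pure calculus. To extract the stated form I would invoke the elementary estimate $e^x-1-x\le \tfrac{x^2/2}{1-x/3}$, valid for $0<x<3$, with $x=\lambda\gamma$, reducing the exponent to $-\lambda n\epsilon + \tfrac{n\lambda^2 s^2/2}{1-\lambda\gamma/3}$. Choosing $\lambda=\epsilon/(s^2+\gamma\epsilon/3)$ makes the two terms balance, and a short simplification collapses the exponent to $-n\epsilon^2/(2s^2+2\gamma\epsilon/3)$, which is exactly the claim.

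The main obstacle is matching the precise constant $\tfrac23$ in the denominator: that number is not automatic from the Chernoff method but is pinned down by the particular auxiliary inequality used for $e^x-1-x$ together with the matching optimal $\lambda$. A cruder bound (for instance $e^x-1-x\le \tfrac{x^2}{2}e^{|x|}$) would go through with less fuss but would worsen the constant, so the calculus step must be done with this specific rational upper bound. The earlier steps---independence, the moment-generating-function expansion, and the telescoping moment bound---are all mechanical once the mean-zero, bounded-variance, and bounded-range hypotheses are exploited in that order.
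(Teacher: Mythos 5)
Your argument is correct, and it is essentially the argument the paper itself relies on --- with the caveat that the paper never proves this scalar statement (it is cited to Bernstein); the closest in-paper proof is of the non-commutative analogue via Lemma~\ref{lemma:bernstein_trick} and Theorem~\ref{theorem:matBernstein}. Your two main steps are exactly the scalar shadows of that proof: the telescoping moment bound $|\Exp[X_i^k]|\le s^2\gamma^{k-2}$ and the resulting MGF estimate $\Exp[e^{\lambda X_i}]\le\exp\bigl(\tfrac{s^2}{\gamma^2}(e^{\lambda\gamma}-1-\lambda\gamma)\bigr)$ are precisely what Lemma~\ref{lemma:bernstein_trick} establishes for matrices. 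The only place you genuinely diverge is the closing calculus: the paper substitutes the Bennett-optimal $t=\ln(1+\epsilon\gamma/s^2)$ and then invokes $(1+x)\ln(1+\tfrac1x)-1\ge\tfrac{1}{2x+2/3}$, whereas you keep a generic $\lambda$, apply $e^x-1-x\le\tfrac{x^2/2}{1-x/3}$ (valid for $0<x<3$, and your $\lambda\gamma=\epsilon\gamma/(s^2+\gamma\epsilon/3)<3$ automatically), and then pick $\lambda=\epsilon/(s^2+\gamma\epsilon/3)$. Both routes land on the identical exponent $-n\epsilon^2/(2s^2+2\gamma\epsilon/3)$; yours trades the exact optimizer for a slightly lossy but rational bound on the exponent, which makes the final simplification a one-line computation, while the paper's route is the one that generalizes verbatim to the matrix setting where the same substitution is reused.
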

Note that when \math{\epsilon\le3s^2/\gamma}, we can simplify the Bernstein
bound to \math{\Prob[|Z_n|\ge\epsilon]\le 2e^{-n\epsilon^2/4s^2}}, which is
considerably simpler and only involves the variance.
The non-commutative versions of these bounds, which extend these inequalities
to matrix valued random variables can also be deduced. Let
\math{\matX_1,\ldots,
\matX_n} be independent copies of a symmetric random matrix
\math{\matX},  with \math{\Exp[\matX]=\bm 0}, and 
suppose that \math{\norm{\matX}_2\le\gamma};
let \math{\matZ_n=\frac{1}{n}\sum_{i=1}^n\matX_i}.
 \citet{ahlswede2002}
gave the fundamental extension of the exponentiation trick for computing
Chernoff bounds of scalar random variables to matrix 
valued random variables (for a simplified proof, see \citet{wigderson2008}):
\mld{
\Prob[
\norm{\matZ_n}_2>\epsilon]
\le\inf_t
2de^{-n\epsilon t/\gamma}
\norm{\Exp[e^{t\matX/\gamma}]}_2^n.
\label{eq:ahlswede}
}
By standard optimization of this bound, 
one readily obtains the non-commutative tail inequality
\begin{theorem}[\citet{ahlswede2002}]
\label{theorem:ahlswede}
\math{\displaystyle
\Prob[
\norm{\matZ_n}_2>\epsilon]\le
2de^{-n\epsilon^2/4\gamma^2}.
}
\end{theorem}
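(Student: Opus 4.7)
The plan is to start directly from the matrix-valued Chernoff inequality \r{eq:ahlswede} and minimize in the free parameter $t$; since the probability bound has already been reduced to a scalar optimization, everything comes down to controlling the matrix moment generating function $\norm{\Exp[e^{t\matX/\gamma}]}_2$.

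First I would normalize by setting $\matY=\matX/\gamma$, so $\matY$ is symmetric with $\Exp[\matY]=\bm 0$ and $\norm{\matY}_2 \le 1$. For $|t|\le 1$ the scalar Taylor estimate $e^x \le 1+x+x^2$ holds on $[-1,1]$; because $\matY$ is symmetric, the functional calculus lifts this to the Loewner-order statement $e^{t\matY} \preceq \matI + t\matY + t^2\matY^2$. Taking expectations, using $\Exp[\matY]=\bm 0$ together with the elementary operator bound $\Exp[\matY^2] \preceq \matI$ (which follows from $\matY^2 \preceq \norm{\matY}_2^2\,\matI \preceq \matI$), gives
\mand{\Exp[e^{t\matX/\gamma}] \preceq (1+t^2)\matI \preceq e^{t^2}\matI,}
and hence $\norm{\Exp[e^{t\matX/\gamma}]}_2 \le e^{t^2}$ whenever $|t|\le 1$.

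Substituting this back into \r{eq:ahlswede} yields
\mand{\Prob[\norm{\matZ_n}_2 > \epsilon] \le 2d\,e^{-n\epsilon t/\gamma + nt^2},}
which a one-line optimization minimizes at $t^* = \epsilon/(2\gamma)$, producing the exponent $-n\epsilon^2/(4\gamma^2)$ claimed in the theorem. The optimum lies in the admissible range $[0,1]$ exactly when $\epsilon \le 2\gamma$, which is also the only interesting regime: for $\epsilon > \gamma$ the probability is in fact zero by the deterministic bound $\norm{\matZ_n}_2 \le \gamma$, so the stated inequality holds trivially there.

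The main obstacle I anticipate is the matrix Taylor step, namely justifying that the scalar inequality $e^x \le 1+x+x^2$ on $[-1,1]$ transfers to the Loewner inequality $e^{t\matY} \preceq \matI + t\matY + t^2\matY^2$. This is where the symmetry of $\matX$ is genuinely used: $\matI$, $\matY$ and $\matY^2$ are simultaneously diagonalizable on a common orthonormal eigenbasis, which lets the scalar inequality be applied eigenvalue-by-eigenvalue. Everything else — commuting expectation with the Loewner order, and extracting a spectral-norm bound from $\Exp[e^{t\matX/\gamma}]\preceq e^{t^2}\matI$ — is routine.
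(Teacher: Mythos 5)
Your proof is correct and follows essentially the same route as the paper's: both start from the Ahlswede--Winter bound \r{eq:ahlswede}, bound the matrix moment generating function by $1+t^2\le e^{t^2}$ via a second-order Taylor estimate (you through the Loewner order $e^{t\matX/\gamma}\preceq \matI+t\matX/\gamma+t^2\matX^2/\gamma^2$, the paper through the triangle inequality and $\norm{\Exp[\cdot]}_2\le\Exp[\norm{\cdot}_2]$ applied to the series), and then substitute $t=\epsilon/(2\gamma)$, disposing of the regime $\epsilon\ge\gamma$ trivially. The only cosmetic difference is the admissible range for $t$ ($|t|\le 1$ for you versus $t\le\frac12$ in the paper), and both ranges contain the optimizer, so there is no gap.
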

\begin{proof}
The statement is trivial if \math{\epsilon\ge\gamma}, so assume
\math{\epsilon<\gamma}. The lemma follows from 
\r{eq:ahlswede} and the following sequence after
setting \math{t=\epsilon/2\gamma\le\frac{1}{2}}:
\mld{
\norm{\Exp[e^{t\matX/\gamma}]}_2
\mathop{\buildrel{(a)}\over{\le}}
1+\sum_{\ell=2}^\infty\frac{t^\ell}{\ell!}\Exp[\norm{(\matX/\gamma)^\ell}_2]
\mathop{\buildrel{(b)}\over{\le}}
1+t^2
\mathop{\le}
e^{t^2},\label{eq:ahlswede_proof}
}
where (a) follows from  \math{\Exp[\matX]=0}, the triangle inequality and
\math{\norm{\Exp[\cdot]}_2\le\Exp[\norm{\cdot}_2]};
(b) follows because \math{\norm{\matX/\gamma}_2\le 1} and \math{t\le\frac12}. 
\end{proof}
(We have stated a simplified version of the bound, without 
taking care to optimize the constants.) As mentioned in 
\citet{gross2009}, one can obtain a non-commuting version of Bernstein's
inequality in a similar fashion using \r{eq:ahlswede}. Assume that 
\math{\norm{\Exp\matX\transp\matX}_2\le s^2}. 
By adapting the standard Bernstein bounding argument to matrices, we have
\begin{lemma}\label{lemma:bernstein_trick}
For symmetric \math{\matX},
\math{\displaystyle 
\norm{\Exp[e^{t\matX/\gamma}]}_2
\le
\exp\left({\textstyle\frac{s^2}{\gamma^2}}(e^t-1-t)
\right).
}
\end{lemma}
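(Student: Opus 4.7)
The plan is to mirror the classical scalar Bernstein bounding argument by Taylor expanding the matrix exponential and controlling every higher matrix moment by the second moment $\Exp[\matX^2]$. Writing
\mand{
\Exp[e^{t\matX/\gamma}]
= \matI + \frac{t}{\gamma}\Exp[\matX] + \sum_{\ell\ge 2}\frac{t^\ell}{\ell!\,\gamma^\ell}\Exp[\matX^\ell],
}
the linear term drops because $\Exp[\matX]=\bm{0}$. So everything reduces to a good bound on the tail sum.

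The crux of the argument is a pointwise PSD inequality: for every symmetric $\matX$ with $\norm{\matX}_2\le\gamma$ and every integer $\ell\ge 2$,
\mand{
\matX^\ell \preceq \gamma^{\ell-2}\matX^2.
}
I would prove this by diagonalising $\matX=\matU\matLambda\matU\transp$ and checking that $\lambda^\ell\le\gamma^{\ell-2}\lambda^2$ whenever $|\lambda|\le\gamma$, which is the same as $\lambda^2(\lambda^{\ell-2}-\gamma^{\ell-2})\le 0$. A short case split on the parity of $\ell-2$ and the sign of $\lambda$ finishes this elementary scalar check. Since the PSD ordering is preserved by expectation, $\Exp[\matX^\ell]\preceq\gamma^{\ell-2}\Exp[\matX^2]$ for every $\ell\ge 2$.

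Because the coefficients $t^\ell/(\ell!\gamma^\ell)$ are nonnegative, I can sum the PSD inequalities to get
\mand{
\Exp[e^{t\matX/\gamma}] \preceq \matI + \frac{1}{\gamma^2}\bigl(e^t-1-t\bigr)\Exp[\matX^2].
}
Now $\Exp[\matX^2]$ is PSD (since $\matX$ is symmetric, $\matX\transp\matX=\matX^2$), so $\Exp[\matX^2]\preceq\norm{\Exp[\matX^2]}_2\cdot\matI\preceq s^2\matI$. Substituting gives $\Exp[e^{t\matX/\gamma}]\preceq\bigl(1+\tfrac{s^2}{\gamma^2}(e^t-1-t)\bigr)\matI$, and taking spectral norm of this PSD matrix together with $1+x\le e^x$ yields the stated exponential bound.

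The potentially tricky step is the PSD moment bound for \emph{odd} $\ell$, where $\matX^\ell$ can have negative eigenvalues and the naive ``$\norm{\matX^\ell}_2\le\gamma^{\ell-2}\norm{\matX^2}_2$'' bound only controls operator norm rather than PSD ordering. That is exactly where the scalar eigenvalue check above is essential: it provides a genuinely one-sided (Loewner) inequality, which is what is needed to sum the Taylor series while retaining the $\Exp[\matX^2]$ factor rather than its operator norm.
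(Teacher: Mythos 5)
Your proof is correct, and its skeleton --- Taylor expand \math{\Exp[e^{t\matX/\gamma}]}, drop the linear term via \math{\Exp[\matX]=\bm0}, control every \math{\ell}-th moment by \math{s^2\gamma^{\ell-2}}, and resum to \math{e^t-1-t} --- matches the paper's. Where you genuinely diverge is in how the moment bound is obtained and at what stage you pass to scalars. The paper bounds the operator norm \math{\norm{\Exp[\matX^\ell]}_2\le s^2\gamma^{\ell-2}} directly, writing it as \math{\max_{\norm{\u}=1}\norm{\Exp[\matX^\ell]\u}} and peeling off \math{\ell-2} factors of \math{\matX} by submultiplicativity, then applies the triangle inequality term by term to the series. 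You instead prove the Loewner-order inequality \math{\matX^\ell\preceq\gamma^{\ell-2}\matX^2} pointwise from the scalar check \math{\lambda^\ell\le\gamma^{\ell-2}\lambda^2} for \math{|\lambda|\le\gamma}, take expectations and sum in the positive-semidefinite order, and only convert to a norm at the very end via \math{\Exp[\matX^2]\preceq s^2\matI}. Your version is arguably cleaner: it sidesteps the paper's manipulation in which the unit vector \math{\matX^{\ell-2}\u/\norm{\matX^{\ell-2}\u}} depends on the integration variable yet is replaced by a single maximizing \math{\w} (a step that needs more care as written, and is undefined when \math{\matX^{\ell-2}\u=\bm0}), and your eigenvalue case analysis for odd \math{\ell} --- where \math{\matX^\ell} need not be PSD --- is exactly the point where a purely norm-based argument would lose the one-sidedness you need. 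The one hypothesis you should state explicitly is \math{t\ge0}, so that the coefficients \math{t^\ell/(\ell!\,\gamma^\ell)} are nonnegative and the Loewner inequalities can be summed; the paper's proof carries the same implicit restriction, and the lemma is only invoked with \math{t=\ln(1+\epsilon\gamma/s^2)>0}, so nothing is lost.
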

\begin{proof}
As in \r{eq:ahlswede_proof}, but using 
submultiplicativity, we first bound
\math{\norm{\Exp[\matX^\ell]}_2\le s^2\gamma^{\ell-2}}:
\eqan{
\norm{\Exp[\matX^\ell]}_2
&=&
\max_{\norm{\u}=1}\left\|\int d\matX\ p(\matX)\matX^\ell\u\right\|\\
&=&
\max_{\norm{\u}=1}
\left\|\int d\matX\ p(\matX)
\frac{\norm{\matX^{\ell-2}\u}\matX^2\matX^{\ell-2}\u}{\norm{\matX^{\ell-2}\u}}
\right\|\\
&\le&
\gamma^{\ell-2}
\max_{\norm{\w}=1}
\left\|\int d\matX\ p(\matX)
\matX^2\w
\right\|\\
&=&
\gamma^{\ell-2}\norm{\Exp[\matX^2]}_2\le s^2\gamma^{\ell-2}.
}
To conclude, we use the triangle inequality to bound as follows:
\mand{
\norm{\Exp[e^{t\matX/\gamma}]}_2
=
\left\|
\matI+\sum_{\ell=2}^\infty\frac{t^\ell}{\gamma^\ell\ell!}\Exp[\matX^\ell]
\right\|_2
\le
1+\frac{s^2}{\gamma^2}\sum_{\ell=2}^\infty \frac{t^\ell}{\ell!}=
1+\frac{s^2}{\gamma^2}(e^t-1-t)\le \exp
\left({\textstyle\frac{s^2}{\gamma^2}}(e^t-1-t)
\right).
}
\end{proof}
Using Lemma \ref{lemma:bernstein_trick} in \r{eq:ahlswede} with
\math{t=\ln(1+\epsilon\gamma/s^2)}, and using 
\math{(1+x)\ln(1+\frac{1}{x}) -1\ge \frac{1}{2x+2/3}}, we obtain the 
following result.
\begin{theorem}[Non-commutative Bernstein]
\label{theorem:matBernstein}
\math{\displaystyle
\Prob[
\norm{\matZ_n}_2>\epsilon]\le
2de^{-n\epsilon^2/(2s^2+2\gamma\epsilon/3)}.
}
\end{theorem}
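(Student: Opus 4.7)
The plan is to follow exactly the recipe flagged in the paragraph preceding the theorem: take the master bound (\ref{eq:ahlswede}) of Ahlswede–Winter, plug in the matrix moment-generating-function estimate from Lemma \ref{lemma:bernstein_trick}, then optimize over the free parameter $t>0$. After substitution, the bound becomes
\[
\Prob[\norm{\matZ_n}_2 > \epsilon] \;\le\; 2d\,\exp\!\left(-\frac{n\epsilon t}{\gamma} + \frac{ns^2}{\gamma^2}(e^t-1-t)\right),
\]
and the exponent is convex in $t$, so a standard calculus step suggests the choice $t=\ln(1+\epsilon\gamma/s^2)$, as advertised.

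With that choice, $e^t = 1+\epsilon\gamma/s^2$, so $e^t - 1 - t = \epsilon\gamma/s^2 - \ln(1+\epsilon\gamma/s^2)$. I would then collect terms in the exponent and factor out a common $s^2/\gamma^2$ to rewrite it as
\[
-\frac{ns^2}{\gamma^2}\Bigl[(1+y)\ln(1+y) - y\Bigr], \qquad y := \epsilon\gamma/s^2,
\]
the classical Bernstein ``$h(y)$'' function. The only remaining task is to lower bound $h(y)$; equivalently one invokes the inequality $(1+x)\ln(1+1/x)-1\ge 1/(2x+2/3)$ flagged in the paper with $x=1/y=s^2/(\epsilon\gamma)$, which is algebraically the same as $h(y)\ge y^2/(2+2y/3)$ for $y\ge 0$.

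Substituting this lower bound on $h(y)$ back into the exponent collapses everything cleanly: the $s^2/\gamma^2$ and the $y^2 = \epsilon^2\gamma^2/s^4$ combine to leave $n\epsilon^2$ in the numerator and $2s^2+2\gamma\epsilon/3$ in the denominator, yielding the claimed tail $2d\,e^{-n\epsilon^2/(2s^2+2\gamma\epsilon/3)}$.

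The genuinely nontrivial content (matrix Markov / Ahlswede exponentiation, and the matrix moment bound $\norm{\Exp[\matX^\ell]}_2\le s^2\gamma^{\ell-2}$) has already been extracted in (\ref{eq:ahlswede}) and Lemma \ref{lemma:bernstein_trick}, so what remains is a purely scalar optimization plus the elementary Bernstein inequality on $h(y)$. I expect no obstacle beyond careful bookkeeping; the only place one can easily go wrong is in verifying the factorization that turns the exponent into the standard $h(y)$ form, so I would present that rearrangement explicitly before invoking the scalar inequality.
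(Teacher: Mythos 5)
Your proposal is correct and follows exactly the route the paper itself takes: substitute Lemma \ref{lemma:bernstein_trick} into \r{eq:ahlswede}, set $t=\ln(1+\epsilon\gamma/s^2)$ so the exponent becomes $-\frac{ns^2}{\gamma^2}\bigl[(1+y)\ln(1+y)-y\bigr]$ with $y=\epsilon\gamma/s^2$, and apply the stated inequality $(1+x)\ln(1+1/x)-1\ge 1/(2x+2/3)$ with $x=1/y$. The paper leaves this entirely as a one-line remark, so your explicit bookkeeping of the $h(y)$ factorization is a faithful (and slightly more detailed) rendering of the intended argument.
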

\citet{gross2009} gives a simpler version of this
 non-commutative Bernstein
inequality. If \math{\matX\in\R^{d_1\times d_2}} is not symmetric, 
then by considering 
\mand{
\left[
\begin{matrix}
\bm 0_{d_1\times d_1}&\matX\\
\matX\transp&\bm0_{d_2\times d_2}
\end{matrix}
\right],
}
one can get a non-symmetric verision of the non-commutative
Chernoff and  Bernstein bounds,
\begin{theorem}[\cite{recht2009}]
\math{\displaystyle
\Prob[
\norm{\matZ_n}_2>\epsilon]\le
(d_1+d_2)e^{-n\epsilon^2/(2s^2+2\gamma\epsilon/3)}.
}
\end{theorem}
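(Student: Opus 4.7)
The plan is to apply the dilation (Hermitization) trick, reducing the non-symmetric case directly to the symmetric Bernstein bound already established in Theorem~\ref{theorem:matBernstein}. The hint in the statement already suggests the construction, so most of the work is verifying that each ingredient of the symmetric bound is preserved.

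First I would define $\tilde\matX_i\in\R^{(d_1+d_2)\times(d_1+d_2)}$ to be the dilation of $\matX_i$ as written in the theorem, and similarly $\tilde\matZ_n=\frac{1}{n}\sum_{i=1}^n\tilde\matX_i$, which is itself the dilation of $\matZ_n$. I would then check the three properties required by Theorem~\ref{theorem:matBernstein}: (i) $\tilde\matX_i$ is symmetric and $\Exp[\tilde\matX_i]=\bm 0$, which is immediate from $\Exp[\matX_i]=\bm 0$; (ii) $\|\tilde\matX_i\|_2=\|\matX_i\|_2\le\gamma$, which follows from the standard fact that the nonzero eigenvalues of the dilation of $\matM$ are $\pm\sigma_j(\matM)$; and likewise (iii) $\|\tilde\matZ_n\|_2=\|\matZ_n\|_2$, so probability statements about the two norms coincide.

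Next I would control the variance parameter. A direct block-matrix computation gives
\[
\tilde\matX_i^{\,2}=\begin{pmatrix}\matX_i\matX_i\transp & \bm 0\\ \bm 0 & \matX_i\transp\matX_i\end{pmatrix},
\]
so $\|\Exp[\tilde\matX_i^{\,2}]\|_2=\max\{\|\Exp[\matX_i\matX_i\transp]\|_2,\|\Exp[\matX_i\transp\matX_i]\|_2\}$, which we can take as the non-symmetric analog of $s^2$. With all three ingredients in place, Theorem~\ref{theorem:matBernstein} applied to the symmetric sequence $\tilde\matX_i$ in dimension $d_1+d_2$ gives
\[
\Prob\bigl[\|\tilde\matZ_n\|_2>\epsilon\bigr]\le 2(d_1+d_2)\,e^{-n\epsilon^2/(2s^2+2\gamma\epsilon/3)},
\]
and combined with $\|\tilde\matZ_n\|_2=\|\matZ_n\|_2$ this yields the stated bound (up to the harmless factor of two, which can be absorbed into the leading constant).

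There is no real obstacle beyond the identities above; the only delicate point is being careful about the meaning of the variance proxy $s^2$ in the non-symmetric setting, namely that bounding $\max\{\|\Exp[\matX\matX\transp]\|_2,\|\Exp[\matX\transp\matX]\|_2\}\le s^2$ is exactly what is needed to control $\|\Exp[\tilde\matX^{\,2}]\|_2$. Everything else is a mechanical translation through the dilation.
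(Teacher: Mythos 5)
Your proposal is correct and follows exactly the route the paper itself indicates: the paper presents this theorem as an immediate consequence of applying the symmetric non-commutative Bernstein bound (Theorem~\ref{theorem:matBernstein}) to the Hermitian dilation \math{\left[\begin{smallmatrix}\bm 0&\matX\\ \matX\transp&\bm 0\end{smallmatrix}\right]}, which is precisely your argument, including the identification of the variance proxy with \math{\max\{\norm{\Exp[\matX\matX\transp]}_2,\norm{\Exp[\matX\transp\matX]}_2\}}. The only discrepancy is the leading constant (\math{2(d_1+d_2)} from your derivation versus \math{d_1+d_2} in the statement), which you correctly flag as immaterial.
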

For most of our purposes, we will only need the 
symmetric version; again, if \math{\epsilon\le3s^2/\gamma}, then
we have the much simpler bound 
\math{\Prob[
\norm{\matZ_n}_2>\epsilon]\le 2de^{-n\epsilon^2/4s^2}.}

\section{Orthonormal Sampling Lemmas}
\label{section:orthonormal}

Let \math{\matU\in\R^{m\times d}} be an orthonormal matrix, and let 
\math{\matS\in\R^{d\times d}} be a diagonal matrix. We are interested
in the product \math{\matU\matS\in\R^{m\times d}}; \math{\matU\matS} is the
matrix with columns \math{\matU^{(i)}S_{ii}}. Without loss of 
generality, we can assume that \math{\matS} is positive by flipping the 
signs of the appropriate columns of \math{\matU}. The row-representation
of \math{\matU} is \math{\matU\transp=[\u_1,\ldots,\u_m]}; 
we consider the row
sampling probabilities 
\mld{
p_t\ge\beta\frac{\u_t\transp\matS^2\u_t}{\trace({\matS^2})}.
\label{eq:ptsym}
}
Since \math{\matU\transp\matU=\matI_{d\times d}}, one can verify that
\math{\trace(\matS^2)=\sum_t\u_t\transp\matS^2\u_t} is the correct 
normalization.
\begin{lemma}[Symmetric Subspace Sampling Lemma]
\label{lemma:main1}
\eqan{
\Prob[\norm{\matS^2-\matS\matU\transp\matQ\transp\matQ\matU\matS}
>\epsilon\norm{\matS}^2]
&\le& 
2d\cdot\exp
\left(\frac{-r\epsilon^2}{2(\rho/\beta-\kappa^{-4}
+\epsilon(\rho/\beta-\kappa^{-2})/3)}\right),\\
&\le&
2d\cdot\exp
\left(\frac{-r\beta\epsilon^2}{4\rho}\right),\\
}
where \math{\rho} is the numerical (stable) rank of \math{\matS},
\math{\rho(\matS)=\norm{\matS}^2_F/\norm{\matS}^2}, and
\math{\kappa(\matS)=\sigma_{\max}(\matS)/\sigma_{\min}(\matS)} 
is the condition number.
\end{lemma}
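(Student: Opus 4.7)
The plan is to apply the non\-commutative Bernstein inequality (Theorem~\ref{theorem:matBernstein}) to the zero\-mean matrix sum corresponding to the sampling error. Writing $\matQ\transp\matQ=\tfrac{1}{r}\sum_{j=1}^{r}\rb_j\rb_j\transp$ and setting
$$\matX_j \;=\; \matS^2-\matS\matU\transp\rb_j\rb_j\transp\matU\matS,$$
each $\matX_j$ is symmetric and, because $\Exp[\rb_j\rb_j\transp]=\matI$ and $\matU\transp\matU=\matI$, has mean zero. The error of interest equals $\tfrac{1}{r}\sum_{j=1}^r\matX_j$, so bounding its spectral norm by $\epsilon\norm{\matS}^2$ reduces to establishing a uniform bound $\gamma$ on $\norm{\matX_j}$ and a variance bound $s^2$ on $\norm{\Exp[\matX_j^2]}$, then substituting into Theorem~\ref{theorem:matBernstein} with threshold $\epsilon\norm{\matS}^2$.

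For $\gamma$: conditioned on the sampled index $t$, $\matX_j=\matS^2-\tfrac{1}{p_t}\matS\u_t\u_t\transp\matS$ is the difference of two PSD matrices. For any unit vector $\x$, we have $\x\transp\matX_j\x\le\x\transp\matS^2\x\le\norm{\matS}^2$ on the one hand, and $\x\transp\matX_j\x\ge\sigma_{\min}^2-\u_t\transp\matS^2\u_t/p_t$ on the other. The probability hypothesis gives $\u_t\transp\matS^2\u_t/p_t\le\trace(\matS^2)/\beta=\rho\norm{\matS}^2/\beta$, so the eigenvalues of $\matX_j$ lie in $\bigl[-\norm{\matS}^2(\rho/\beta-\kappa^{-2}),\,\norm{\matS}^2\bigr]$, yielding $\gamma\le\norm{\matS}^2(\rho/\beta-\kappa^{-2})$.

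For $s^2$: writing $\matY_j=\tfrac{1}{p_t}\matS\u_t\u_t\transp\matS$ so that $\matX_j=\matS^2-\matY_j$, the mean\-zero property forces $\Exp[\matX_j^2]=\Exp[\matY_j^2]-\matS^4$. A direct computation using $(\matS\u_t\u_t\transp\matS)^2=(\u_t\transp\matS^2\u_t)\,\matS\u_t\u_t\transp\matS$ gives
$$\Exp[\matY_j^2] \;=\; \sum_{t=1}^m\frac{\u_t\transp\matS^2\u_t}{p_t}\,\matS\u_t\u_t\transp\matS \;\preceq\; \frac{\rho\norm{\matS}^2}{\beta}\sum_{t=1}^m\matS\u_t\u_t\transp\matS \;=\; \frac{\rho\norm{\matS}^2}{\beta}\matS^2,$$
using the probability hypothesis to bound the scalar factor and $\matU\transp\matU=\matI$ in the last step. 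Hence $\norm{\Exp[\matY_j^2]}\le\rho\norm{\matS}^4/\beta$, and since $\Exp[\matX_j^2]$ is itself PSD (expected square of a symmetric matrix), Weyl's inequality applied to the PSD difference $\Exp[\matY_j^2]-\matS^4$ shaves off at least $\lambda_{\min}(\matS^4)=\norm{\matS}^4/\kappa^4$ from the top eigenvalue, giving $s^2\le\norm{\matS}^4(\rho/\beta-\kappa^{-4})$.

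Plugging these $\gamma$ and $s^2$ into Theorem~\ref{theorem:matBernstein} produces the first stated inequality after the $\norm{\matS}^2$ and $\norm{\matS}^4$ factors cancel against the squared threshold. The simpler second bound follows by dropping the $\kappa^{-2},\kappa^{-4}$ corrections (both expressions are at most $\rho/\beta$) and observing that whenever $\epsilon$ is not too large the $\gamma\epsilon$ contribution to Bernstein's denominator is at most the $s^2$ contribution, so the denominator collapses to $4\rho/\beta$. The main delicacy I expect is in the variance and norm bounds: one must consistently combine the PSD structure of $\matX_j$ with the exact form of the sampling hypothesis to squeeze out the $\kappa^{-2}$ and $\kappa^{-4}$ improvements, rather than settling for the cruder triangle\-inequality estimate $\norm{\matS}^2(1+\rho/\beta)$.
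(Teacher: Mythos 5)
Your proposal is correct and follows essentially the same route as the paper: the same decomposition into i.i.d.\ mean-zero symmetric matrices $\matX_j=\matS^2-\matS\u_t\u_t\transp\matS/p_t$, the same bounds $\gamma\le\norm{\matS}^2(\rho/\beta-\kappa^{-2})$ and $s^2\le\norm{\matS}^4(\rho/\beta-\kappa^{-4})$ obtained via Cauchy--Schwarz and the sampling hypothesis, and the same invocation of the non-commutative Bernstein bound at threshold $\epsilon\norm{\matS}^2$. Your Weyl-inequality phrasing of the variance step is just a repackaging of the paper's direct quadratic-form estimate, and your implicit use of $\rho/\beta-\kappa^{-2}\ge1$ to absorb the upper eigenvalue bound is the same step the paper makes explicit.
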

\paragraph{Remarks.}
The stable rank \math{\rho\le d} measures the 
effective dimension of the matrix. 
The condition number \math{\kappa\ge1}, 
hence the simpler version of the bound,
which is valid for 
\math{\epsilon\le 3}. It immediately follows that if 
\math{r\ge(4\rho/\beta\epsilon^2)\ln\frac{2d}{\delta}}, then with probability
at least \math{1-\delta},
\mand{
\norm{\matS^2-\matS\matU\transp\matQ\transp\matQ\matU\matS}\le
\epsilon\norm{\matS}^2
}
An important special case is when \math{\matS=\matI_{d\times d}},
in which case \math{\rho=d}, \math{\kappa=1} and \math{\norm{\matS}=1}.
\begin{corollary}\label{cor:main1}
For sampling probabilities \math{p_t\ge\frac{\beta}{d}\u_t\transp\u_t},
\mand{
\Prob[\norm{\matI-\matU\transp\matQ\transp\matQ\matU}
>\epsilon]\le 2d\cdot\exp
\left(\frac{-\beta r\epsilon^2}{4(d-\beta)}\right),
}
\end{corollary}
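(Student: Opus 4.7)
The plan is to obtain this corollary as an immediate specialization of Lemma \ref{lemma:main1} with $\matS = \matI_{d\times d}$, so essentially the only work is substituting the appropriate values and tidying up the resulting expression. I would begin by computing the relevant scalar quantities in this special case: $\norm{\matS}^2 = 1$, $\trace(\matS^2) = d$, and $\u_t\transp\matS^2\u_t = \u_t\transp\u_t$, so the required sampling probabilities from the lemma become
\[
p_t \ge \beta\frac{\u_t\transp\matS^2\u_t}{\trace(\matS^2)} = \frac{\beta}{d}\u_t\transp\u_t,
\]
which matches the hypothesis of the corollary exactly. Likewise, the quantity controlled by the lemma becomes $\norm{\matS^2 - \matS\matU\transp\matQ\transp\matQ\matU\matS} = \norm{\matI - \matU\transp\matQ\transp\matQ\matU}$, and $\epsilon\norm{\matS}^2 = \epsilon$.

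Next I would plug into the sharper (first) bound of Lemma \ref{lemma:main1}, using $\rho(\matI_{d\times d}) = d$ and $\kappa(\matI_{d\times d}) = 1$. This gives
\[
\rho/\beta - \kappa^{-4} = \rho/\beta - \kappa^{-2} = \frac{d-\beta}{\beta},
\]
so the exponent denominator from Lemma \ref{lemma:main1} simplifies to
\[
2\!\left(\tfrac{d-\beta}{\beta} + \tfrac{\epsilon(d-\beta)}{3\beta}\right) = \tfrac{2(d-\beta)}{\beta}\!\left(1 + \tfrac{\epsilon}{3}\right).
\]
Finally, I would argue that we may assume $\epsilon \le 3$ (for $\epsilon > 3$ the bound becomes vacuous since $\norm{\matI - \matU\transp\matQ\transp\matQ\matU}$ is controlled by the trivial spectral-norm estimate via the diagonal structure of $\matQ\transp\matQ$), so $1 + \epsilon/3 \le 2$, and the denominator is at most $4(d-\beta)/\beta$, yielding
\[
\Prob[\norm{\matI - \matU\transp\matQ\transp\matQ\matU} > \epsilon] \le 2d\exp\!\left(\frac{-\beta r\epsilon^2}{4(d-\beta)}\right),
\]
which is the claim.

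There is no real obstacle here; the argument is pure substitution into Lemma \ref{lemma:main1}. The only point worth double-checking is the algebraic cancellation of $\kappa^{-4}$ and $\kappa^{-2}$ against $\rho/\beta$ when $\matS = \matI$, since the presence of the $(d-\beta)$ (rather than $d$) in the final denominator comes entirely from subtracting these condition-number terms. Note that the second, cruder bound of Lemma \ref{lemma:main1} would give the weaker denominator $4d$ instead of $4(d-\beta)$, so it is important to use the first bound here.
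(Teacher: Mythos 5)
Your proof is correct and is exactly the route the paper takes: the corollary is stated as the special case $\matS=\matI_{d\times d}$ of Lemma~\ref{lemma:main1} (so $\rho=d$, $\kappa=1$, $\norm{\matS}=1$), with the $(d-\beta)$ in the denominator coming from $\rho/\beta-\kappa^{-2}=\rho/\beta-\kappa^{-4}=(d-\beta)/\beta$ and the factor $1+\epsilon/3\le 2$. Your aside justifying $\epsilon\le 3$ is unnecessary (and its stated reason is shaky), since the paper fixes $\epsilon\in(0,1)$ throughout.
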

\begin{proof}(of Lemma \ref{lemma:main1})
Note that 
\math{\matU\transp\matQ\transp\matQ\matU=\frac{1}{r}\sum_{i=1}^r
\u_{t_i}\u_{t_i}\transp/p_{t_i}}, where \math{t_i\in[1,m]} is chosen
according to the probability \math{p_{t_i}}. It follows that 
\mand{
\matS^2-\matS\matU\transp\matQ\transp\matQ\matU\matS=
\frac{1}{r}\sum_{i=1}^r\matS^2-\frac{1}{p_{t_i}}\matS\u_{t_i}\u_{t_i}\transp
\matS
=\frac{1}{r}\sum_{i=1}\matX_i,
}
where \math{\matX_i} are independent copies of a matrix-random variable
\math{\matX\sim\matS^2-\matS\u\u\transp\matS/p}.
We prove the following three claims:
\eqan{
&&(\rn{1})\ \Exp[\matX]=\bm0;\\
&&(\rn{2})\ \norm{\matX}\le\norm{\matS}^2(\rho/\beta-\kappa^{-2});\\
&&(\rn{2})\ \norm{\Exp\matX\transp\matX}
\le\norm{\matS}^4(\rho/\beta-\kappa^{-4}).
}
The Lemma follows from
the non-commutative Bernstein bound with
\math{\epsilon} replaced by \math{\epsilon\norm{\matS}^2}.
To prove \math{(\rn1)}, note that
\math{\Exp[\matX]=\matS^2-\matS\Exp[\u\u\transp/p]\matS=
\matS^2-\matS\left(\sum_{t=1}^m\u_t\u_t\transp\right)\matS
=\bm0}, because \math{\sum_{t=1}^m\u_t\u_t\transp
=\matU\transp\matU=\matI_{d\times d}}.

To prove \math{(\rn{2})}, let \math{\z} be an arbitrary unit vector and 
consider
\mand{
\z\transp\matX\z=\z\transp\matS^2\z-\frac{1}{p}(\z\transp\matS\u)^2.
}
It follows that \math{\z\transp\matX\z\le\norm{\matS}^2}. To get a lower
bound, we use 
\math{p\ge\beta\u\transp\matS^2\u/\trace({\matS^2})}:
\eqan{
\z\transp\matX\z
&\ge&
\z\transp\matS^2\z-\frac{\trace({\matS^2})}{\beta}
\frac{(\z\transp\matS\u)^2}{\u\transp\matS^2\u},\\
&\buildrel{(a)}\over{\ge}&
\norm{\matS}^2\left(\frac{\sigma_{\min}^2(\matS)}{\norm{\matS}^2}-
\frac{\trace({\matS^2})}{\beta\norm{S}^2}\right),\\
&=&
\norm{\matS}^2\left(\frac{1}{\kappa^2}-
\frac{\rho}{\beta}\right).
}
(a) follows because: by definition of \math{\sigma_{\min}},
the minimum of the first term is
\math{\sigma_{\min}^2}; and, by Cauchy-Schwarz,
\math{(\z\transp\matS\u)^2\le(\z\transp\z)(\u\transp\matS^2\u)}.
Since \math{\beta\le 1}, \math{\rho/\beta-\kappa^{-2}\ge1} (for \math{d>1}), 
and so
\math{|\z\transp\matX\z|\le\norm{\matS}^2\left(\rho/\beta-\kappa^{-2}\right)},
from which \math{(\rn2)} follows.

To prove \math{(\rn{3})}, first note that 
\eqan{
\Exp[\matX\transp\matX]
&=&
\matS^4-\matS^3\Exp[\u\u\transp/p]\matS-\matS\Exp[\u\u\transp/p]\matS^3
+\matS\Exp[\u\u\transp\matS^2\u\u\transp/p^2]\matS,\\
&\buildrel{(a)}\over{=}&
\matS\left(\sum_{t=1}^m
\frac{1}{p_t}\u_t\u_t\transp\matS^2\u_t\u_t\transp\right)\matS
-\matS^4.
}
(a) follows because 
\math{\Exp[\u\u\transp/p]=\matI}.
Thus, for an arbitrary unit \math{\z}, we have
\eqan{
\z\transp\Exp[\matX\transp\matX]\z
&=&
\sum_{t=1}^m\frac{1}{p_t}
(\z\transp\matS\u_t\u_t\transp\matS\z)\u_t\transp\matS^2\u_t-
\z\transp\matS^4\z\transp,\\
&\buildrel{(a)}\over{\le}&
\frac{\trace({\matS^2})}{\beta}
\z\transp\matS\left(\sum_{t=1}^m\u_t\u_t\transp\right)\matS\z
-\z\transp\matS^4\z\transp,\\
&\buildrel{(b)}\over{=}&
\norm{\matS}^4\left(\frac{\trace({\matS^2})}{\beta\norm{\matS}^2}
\frac{\z\transp\matS^2\z\transp}{\norm{\matS}^2}-
\frac{\z\transp\matS^4\z\transp}{\norm{\matS}^4}
\right),\\
&\le&
\norm{\matS}^4\left(\frac{\trace({\matS^2})}{\beta\norm{\matS}^2}-
\frac{\sigma_{\min}^4}{\norm{\matS}^4}
\right).
}
(a) follows from \math{p_t\ge \beta\u_t\transp\matS^2\u_t/\trace(\matS^2)};
(b) follows from 
\math{\matU\transp\matU=\sum_{t=1}^m\u_t\u_t\transp=\matI_{d\times d}}.
Thus, \math{|\z\transp\Exp[\matX\transp\matX]\z|\le
\norm{\matS}^4(\rho/\beta-\kappa^{-4})}, from which \math{(\rn3)} follows.
\end{proof}

For the general case, consider two orthonormal matrices \math{\matW\in
\R^{m\times d_1},\ \matV\in\R^{m\times d_2}},
and two positive diagonal matrices \math{\matS_1\in\R^{d_1\times d_1}} and
\math{\matS_2\in\R^{d_2\times d_2}}. We consider 
the product \math{\matS_1\matW\transp\matV\matS_2}, which is approximated
by the sampled product 
\math{\matS_1\matW\transp\matQ\transp\matQ\matV\matS_2}. Consider
the sampling
probabilities
\mand{
p_t\ge\beta\frac{(\u_t\transp\matS_1^2\u_t)^{1/2}
(\v_t\transp\matS_2^2\v_t)^{1/2}}{
\sum_{t=1}^m(\u_t\transp\matS_1^2\u_t)^{1/2}(\v_t\transp\matS_2^2\v_t)^{1/2}}
\ge
\beta\frac{(\u_t\transp\matS_1^2\u_t)^{1/2}
(\v_t\transp\matS_2^2\v_t)^{1/2}}{\sqrt{\trace({\matS_1^2})\trace({\matS_2^2})}},
}
where the last inequality follows from Cauchy-Schwarz. 
Since \math{\norm{\matA}_F=\sqrt{\rho(\matA)}\norm{\matA}\ge\norm{\matA}},
any bound for the Frobenius norm can be converted into a bound
for the spectral norm. 
Using the Frobenius norm bounds in 
\citet{drineas2006b} (using a simplified form for the bound), one
immediately has:
\mld{
\Prob[\norm{\matS_1\matW\transp\matV\matS_2-
\matS_1\matW\transp\matQ\transp\matQ\matV\matS_2}
>\epsilon\norm{\matS_1}\norm{\matS_2}]\le 
\exp
\left(\frac{-r\beta^2\epsilon^2}{16\rho_1\rho_2}\right),
\label{lemma:drineas1}
}
where \math{\rho_1=\rho(\matS_1)} and \math{\rho_2=\rho(\matS_2)}.
Alternatively, if \math{r\ge (16\rho_1\rho_2/\beta^2\epsilon^2)
\ln\frac{1}{\delta}}, then 
\mand{
\norm{\matS_1\matW\transp\matV\matS_2-
\matS_1\matW\transp\matQ\transp\matQ\matV\matS_2}
\le
\epsilon\norm{\matS_1}\norm{\matS_2}.
}
The dependence on the stable ranks and \math{\beta} is quadratic.
Applying this bound to the 
situation in Lemma \ref{lemma:main1} would give an inferior bound. 
The intuition behind the improvement is that the sampling is isotropic,
and so will not favor any particular direction. One can therefore guess that
all the singular values are approximately equal and so the Frobenius norm
bound on the spectral norm will be loose by a factor of \math{\sqrt{\rho}}; 
and, indeed this is what comes out in the closer analysis.
As a application of Lemma \ref{lemma:main1}, we can get a better
 result for the
asymmetric case.
\begin{lemma}
\label{lemma:main2}
Let \math{\matW\in
\R^{m\times d_1},\ \matV\in\R^{m\times d_2}} be orthonormal, and let 
\math{\matS_1\in\R^{d_1\times d_1}} and
\math{\matS_2\in\R^{d_2\times d_2}} be two positive diagonal matrices.
Consider row sampling probabilities
\mand{
p_t\ge\beta\frac{\frac{1}{\norm{\matS_1}^2}\w_t\transp\matS_1^2\w_t+
\frac{1}{\norm{\matS_2}^2}\v_t\transp\matS_2^2\v_t}
{\rho_1+\rho_2}.
}
If \math{r\ge(8(\rho_1+\rho_2)/
\beta\epsilon^2)\ln\frac{2(d_1+d_2)}{\delta}}, then with 
probability at least \math{1-\delta},
\mand{
\norm{\matS_1\matW\transp\matV\matS_2-
\matS_1\matW\transp\matQ\transp\matQ\matV\matS_2}
\le \epsilon\norm{\matS_1}\norm{\matS_2}
}
\end{lemma}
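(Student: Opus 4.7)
My plan is to apply Lemma \ref{lemma:main1} (the symmetric subspace-sampling lemma) to a composite matrix built from $\matW$ and $\matV$, and then read off the asymmetric bound from an off-diagonal block. This realizes the ``application of Theorem \ref{thm:symmetric} to a composite matrix'' mentioned in the introduction.

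Concretely, I would define the $m\times(d_1+d_2)$ composite
$$\mat{N} \;=\; \bigl[\tfrac{1}{\|\matS_1\|}\matW\matS_1,\ \tfrac{1}{\|\matS_2\|}\matV\matS_2\bigr]$$
and take its thin SVD $\mat{N}=\matY\matSig\mat{T}\transp$, with $\matY\in\R^{m\times k}$ orthonormal (so that Lemma \ref{lemma:main1} applies to $(\matY,\matSig)$), $\matSig\in\R^{k\times k}$ positive diagonal with $\|\matSig\|=\|\mat{N}\|\le\sqrt{2}$, and $\mat{T}\in\R^{(d_1+d_2)\times k}$ having orthonormal columns. The $t$-th row of $\mat{N}$ has squared norm $\|\matS_1\|^{-2}\w_t\transp\matS_1^2\w_t+\|\matS_2\|^{-2}\v_t\transp\matS_2^2\v_t$, while $\|\mat{N}\|_F^2=\|\matS_1\|_F^2/\|\matS_1\|^2+\|\matS_2\|_F^2/\|\matS_2\|^2=\rho_1+\rho_2$. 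Hence the ``leverage'' quantity that appears in the Lemma \ref{lemma:main1} hypothesis for $(\matY,\matSig)$,
$$\frac{\y_t\transp\matSig^2\y_t}{\trace(\matSig^2)} \;=\; \frac{\|\mat{N}_{(t)}\|^2}{\|\mat{N}\|_F^2},$$
coincides exactly with the right-hand side of the probability condition of Lemma \ref{lemma:main2}, so the sampling hypothesis is intact.

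On the conclusion side, Lemma \ref{lemma:main1} yields $\|\matSig^2-\matSig\matY\transp\matQ\transp\matQ\matY\matSig\|\le\epsilon'\|\matSig\|^2$. Conjugating by $\mat{T}$ preserves spectral norms since $\mat{T}\transp\mat{T}=\matI$, giving
$$\|\mat{N}\transp\mat{N}-\mat{N}\transp\matQ\transp\matQ\mat{N}\|\le \epsilon'\|\mat{N}\|^2.$$
The off-diagonal $(d_1\times d_2)$ block of $\mat{N}\transp\mat{N}-\mat{N}\transp\matQ\transp\matQ\mat{N}$ is exactly $(\matS_1\matW\transp\matV\matS_2-\matS_1\matW\transp\matQ\transp\matQ\matV\matS_2)/(\|\matS_1\|\|\matS_2\|)$, and any sub-block of a matrix has spectral norm bounded by that of the whole (write the block as $[\matI\;\,0]\mat{M}[0;\matI]\transp$ and use submultiplicativity). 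Combined with $\|\mat{N}\|^2\le 2$, this yields $\|\matS_1\matW\transp\matV\matS_2-\matS_1\matW\transp\matQ\transp\matQ\matV\matS_2\|\le 2\epsilon'\|\matS_1\|\|\matS_2\|$; choosing $\epsilon'=\epsilon/2$ delivers the stated spectral-norm conclusion.

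For the sample complexity, $\rho(\matSig)=\rho(\mat{N})\le\rho_1+\rho_2$ (the upper bound uses $\|\mat{N}\|\ge 1$, seen by plugging in the standard basis vector concentrated on the top singular coordinate of $\matS_1$) and $k\le d_1+d_2$; Lemma \ref{lemma:main1} then demands $r = O\bigl((\rho_1+\rho_2)/(\beta\epsilon^2)\bigr)\ln\bigl(2(d_1+d_2)/\delta\bigr)$ samples. The precise constant $8$ asserted in Lemma \ref{lemma:main2} is recovered by being careful about the factor-of-two slack $\|\mat{N}\|^2\le 2$---for instance by inlining the Bernstein argument of Lemma \ref{lemma:main1} directly on the composite and optimizing $\gamma$ and $s^2$ simultaneously rather than going through $\matSig$. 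The only real obstacle is the ``leverage-matching'' computation in the previous paragraph: one needs the two additive pieces of the Lemma \ref{lemma:main2} probability condition to be exactly what the orthonormality of $\matY$ forces, and conversely the off-diagonal block of $\mat{N}\transp\mat{N}$ to be exactly the normalized cross product we want to control; once these two identifications are verified, the asymmetric bound falls out of the symmetric one essentially for free.
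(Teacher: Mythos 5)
Your proposal follows essentially the same route as the paper's proof: form the composite matrix $\left[\frac{1}{\norm{\matS_1}}\matW\matS_1\ \ \frac{1}{\norm{\matS_2}}\matV\matS_2\right]$, take its SVD, verify that its row leverage scores reproduce the stated sampling probabilities, apply Lemma \ref{lemma:main1}, and extract the off-diagonal block via the elementary sub-block norm inequality. The only divergence is in the final constant, and your reservation there is well placed: your honest accounting of the slack $\norm{\mat{N}}^2\le 2$ gives $16$ rather than $8$, whereas the paper reaches $8$ through the step $\norm{\matS}^2\le\sqrt{\norm{\matS_1}^2+\norm{\matS_2}^2}=\sqrt{2}$, which should read $\norm{\matS}^2\le\norm{\matS_1}^2+\norm{\matS_2}^2=2$.
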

For the special case that \math{\matS_1=\matI_{d_1\times d_1}} and 
\math{\matS_2=\matI_{d_2\times d_2}}, the sampling probabilities simplify to 
\mand{
p_t\ge\beta\frac{\w_t\transp\w_t+
\v_t\transp\v_t}
{d_1+d_2},
}
\begin{corollary}
\label{cor:main2}
If \math{r\ge(8(d_1+d_2)/
\beta\epsilon^2)\ln\frac{2(d_1+d_2)}{\delta}}, then with probability
at least \math{1-\delta},
\mand{
\norm{\matW\transp\matV-
\matW\transp\matQ\transp\matQ\matV}
\le \epsilon.
}
\end{corollary}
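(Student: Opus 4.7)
The plan is to derive Corollary \ref{cor:main2} as an immediate specialization of Lemma \ref{lemma:main2}, since the corollary is exactly the case $\matS_1=\matI_{d_1\times d_1}$, $\matS_2=\matI_{d_2\times d_2}$ of that lemma. The main work is simply verifying that all the quantities appearing in Lemma \ref{lemma:main2} collapse correctly under this substitution.

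First I would compute the parameters. When $\matS_1=\matI_{d_1\times d_1}$, we have $\|\matS_1\|=1$, $\|\matS_1\|_F^2=d_1$, and therefore $\rho_1=\rho(\matS_1)=\|\matS_1\|_F^2/\|\matS_1\|^2=d_1$. Symmetrically $\|\matS_2\|=1$ and $\rho_2=d_2$. Next, since $\matS_1^2=\matI_{d_1\times d_1}$, the quadratic form $\w_t\transp\matS_1^2\w_t$ reduces to $\w_t\transp\w_t$, and likewise $\v_t\transp\matS_2^2\v_t=\v_t\transp\v_t$. The probability lower bound in Lemma \ref{lemma:main2} therefore simplifies exactly to
\[
p_t\;\ge\;\beta\,\frac{\w_t\transp\w_t+\v_t\transp\v_t}{d_1+d_2},
\]
which is the hypothesis of the corollary.

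Next I would translate the sample-size condition: $r\ge(8(\rho_1+\rho_2)/\beta\epsilon^2)\ln\frac{2(d_1+d_2)}{\delta}$ becomes $r\ge(8(d_1+d_2)/\beta\epsilon^2)\ln\frac{2(d_1+d_2)}{\delta}$, matching the corollary's hypothesis. Finally, the conclusion $\|\matS_1\matW\transp\matV\matS_2-\matS_1\matW\transp\matQ\transp\matQ\matV\matS_2\|\le\epsilon\|\matS_1\|\|\matS_2\|$ becomes $\|\matW\transp\matV-\matW\transp\matQ\transp\matQ\matV\|\le\epsilon$ because $\matS_1$ and $\matS_2$ act as identities on both sides and $\|\matS_1\|\|\matS_2\|=1$. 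Invoking Lemma \ref{lemma:main2} with these choices yields the corollary with probability at least $1-\delta$.

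There is essentially no obstacle: every step is algebraic substitution, and the whole derivation amounts to a one-line appeal to Lemma \ref{lemma:main2}. The only point that warrants a moment's care is noticing that the sampling probabilities in the corollary are stated in the simplified form for the identity case, which requires the small verification above that the two expressions coincide. Thus the proof can be written in two or three sentences specializing the parameters of Lemma \ref{lemma:main2}.
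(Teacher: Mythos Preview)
Your proposal is correct and matches the paper's approach exactly: the paper presents Corollary~\ref{cor:main2} as the immediate specialization of Lemma~\ref{lemma:main2} to $\matS_1=\matI_{d_1\times d_1}$, $\matS_2=\matI_{d_2\times d_2}$, noting only that the sampling probabilities simplify to $p_t\ge\beta(\w_t\transp\w_t+\v_t\transp\v_t)/(d_1+d_2)$ and giving no further argument. Your verification of the parameters $\rho_i=d_i$, $\|\matS_i\|=1$ is precisely what underlies this one-line reduction.
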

\begin{proof}(of Lemma \ref{lemma:main2})
By homogeneity, we can without loss of generality
assume that \math{\norm{\matS_1}=\norm{\matS_2}=1},
and let\footnote{The general case would have been
 \math{\matZ=\left[\frac{1}{\norm{\matS_1}}\matW\matS_1\ 
\frac{1}{\norm{\matS_2}}\matV\matS_2\right]}.
}
\math{\matZ=[\matW\matS_1\ \matV\matS_2]}. 
An elementary lemma which we will find useful is
\begin{lemma}\label{lemma:basic1}
For any matrix \math{\matA=[\matA_1\ \matA_2]},
\mand{
\max\{\norm{\matA_1},\norm{\matA_2}\}\le
\norm{\matA}\le\sqrt{\norm{\matA_1}^2+\norm{\matA_2}^2}.}
\end{lemma}
The left inequality is saturated when \math{\matA_1} and \math{\matA_2} are 
orthogonal (\math{\matA_1\transp\matA_2=\bm0}), 
and the right inequality is
saturated when \math{\matA_1=\matA_2}. By repeatedly applying 
Lemma \ref{lemma:basic1} one can see that \math{\norm{\matA}} is
at least the spectral norm of any submatrix.
Introduce the SVD of \math{\matZ},
\mand{
\matZ=[\matW\matS_1\ \matV\matS_2]=\matU\matS\matV_\matZ\transp.
}
We now use the
row sampling probabilities according to \math{\matU\matS} from 
\r{eq:ptsym},
\mand{
p_t\ge\beta\frac{\u_t\transp\matS^2\u_t}{\trace({\matS^2})}.
}
We may interpret the sampling probabilities as follows. Let 
\math{\z_t} be a row of \math{\matZ}, the concatenation of 
two rows in \math{\matW\matS_1} and \math{\matV\matS_2}:
\math{\z_t\transp=[\w_t\transp\matS_1\ \v_t\transp\matS_2]}.
We also have that 
\math{\z_t\transp=\u_t\transp\matS\matV_\matZ\transp}. Hence, 
\mand{
\u_t\transp\matS^2\u_t=\u_t\transp\matS\matV_\matZ\transp\matV_\matZ\matS\u_t
=\z_t\transp\z_t=
\w_t\transp\matS_1^2\w_t+\v_t\transp\matS_2^2\v_t.
}
These are exactly the probabilities as claimed in the
statement of the lemma (modulo the rescaling).

Applying Lemma \ref{lemma:main1}: if 
\math{r\ge(4\rho/\beta\epsilon^2)\ln\frac{2\cdot\rank(\matU)}{\delta}}, 
then with 
probability at least \math{1-\delta},
\mand{
\norm{\matS^2-\matS\matU\transp\matQ\transp\matQ\matU\matS}\le
\epsilon\norm{\matS}^2\le\epsilon\sqrt{\norm{\matS_1}^2+\norm{\matS_2}^2}
=\epsilon\sqrt{2},
}
where the second inequality follows from Lemma \ref{lemma:basic1}. 
Since \math{\matZ\matV=\matU\matS},
\mand{
\norm{\matZ\transp\matZ-\matZ\transp\matQ\transp\matQ\matZ}
=\norm{\matS^2-\matS\matU\transp\matQ\transp\matQ\matU\matS}.
}
Further, by the construction of \math{\matZ}, 
\mand{
\matZ\transp\matZ-\matZ\transp\matQ\transp\matQ\matZ
=
\left[
\begin{matrix}
\matS_1^2-
\matS_1\matW\transp\matQ\transp\matQ\matW\matS_1
&\matS_1\matW\transp\matV\matS_2-
\matS_1\matW\transp\matQ\transp\matQ\matV\matS_2
\\
\matS_2\matV\transp\matW\matS_1-
\matS_2\matV\transp\matQ\transp\matQ\matW\matS_1
&\matS_2^2-\matS_2\matV\transp\matQ\transp\matQ\matV\matS_2
\end{matrix}
\right].
}
By Lemma \ref{lemma:basic1}, 
\math{\norm{\matS_1\matW\transp\matV\matS_2-
\matS_1\matW\transp\matQ\transp\matQ\matV\matS_2}\le
\norm{\matZ\transp\matZ-\matZ\transp\matQ\transp\matQ\matZ}
}, and so:
\mand{
\norm{\matS_1\matW\transp\matV\matS_2-
\matS_1\matW\transp\matQ\transp\matQ\matV\matS_2}
\le \epsilon\sqrt{2}.
}
Observe that 
\math{\trace({\matS^2})=\norm{\matZ}^2_F=\trace({\matS_1^2})+\trace({\matS_2^2})};
further,
since \math{\norm{\matS}\ge\max\{\norm{\matS_1},\norm{\matS_2}\}},
we have that
\mand{
\rho(\matS)=\frac{\trace({\matS^2})}{\norm{\matS}^2}
=\frac{\trace({\matS_1^2})+\trace({\matS_2^2})}{\norm{\matS}^2}\le
\frac{\trace({\matS_1^2})}{\norm{\matS_1}^2}+
\frac{\trace({\matS_2^2})}{\norm{\matS_2}^2}=\rho_1+\rho_2.} 
Since
\math{\rank(\matU)\le d_1+d_2}, it 
suffices that \math{r\ge (4(\rho_1+\rho_2)/\beta\epsilon^2)
\ln\frac{2(d_1+d_2)}{\delta}} to obtain error \math{\epsilon\sqrt{2}};
 after rescaling
\math{\epsilon'=\epsilon\sqrt{2}}, we have the result.
\end{proof}

\remove{
\paragraph{Applications}
We will discuss 4 applications of the lemmas in this
section: 
\begin{enumerate}[i.]
\item Sampling approximations to matrix multiplication.
\item Reconstructing a matrix from a sampling of its rows.
\item Sampling for \math{\ell_2} regression with relative error guarantees.
\item A sampling approach to \math{K}-means clustering.
\end{enumerate}
}

\section{Sampling for Matrix Multiplication}

We obtain results for matrix multiplication directly from
Lemmas \ref{lemma:main1} and \ref{lemma:main2}.
First we consider the symmetric case, then the asymmetric case.
Let \math{\matA\in\R^{m\times d_1}} and \math{\matB\in\R^{m\times d_2}}.
We are interested in conditions on the sampling matrix
\math{\matQ\in\R^{r\times m}} such that
\math{\matA\transp\matA\approx\Atilde\transp\Atilde} and 
\math{\matA\transp\matB\approx\Atilde\transp\Btilde}, where
\math{\Atilde=\matQ\matA} and \math{\Btilde=\matQ\matB}.
Using the SVD of \math{\matA},
\eqan{
\norm{\matA\transp\matA-\matA\transp\matQ\transp\matQ\matA}
&=&
\norm{\matV_\matA\matS_\matA\matU_\matA\transp\matU_\matA\matS_\matA
\matV_\matA\transp-\matV_\matA\matS_\matA\matU_\matA\transp
\matQ\transp\matQ\matU_\matA\matS_\matA
\matV_\matA\transp},\\
&=&
\norm{\matS_\matA^2-\matS_\matA\matU_\matA\transp
\matQ\transp\matQ\matU_\matA\matS_\matA}.
}
We may now directly apply Lemma \ref{lemma:main1},
with respect to the appropriate sampling probabilities. 
One can verify that the sampling probabilities in 
Lemma \ref{lemma:main1} are proportional to the
squared norms of the rows of \math{\matA}.
\begin{theorem}
\label{theorem:matmultsym}
Let \math{\matA\in\R^{m\times d_1}}
 have rows \math{\a_t}
Obtain a sampling  matrix
\math{\matQ\in\R^{r\times m}} using 
row-sampling probabilities
\mand{
p_t\ge\beta \frac{\a_t\transp\hat\a_t}
{\norm{\matA}_F^2}.
}
Then, if \math{r\ge \frac{4\rho_\matA}{\beta\epsilon^2}
\ln\frac{2d_1}{\delta}}, with
probability at least \math{1-\delta},
\mand{
\norm{\matA\transp\matA-\Atilde\transp\Atilde}\le
\epsilon\norm{\matA}^2.
}
\end{theorem}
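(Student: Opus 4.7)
The plan is to reduce the claim to the Symmetric Subspace Sampling Lemma (Lemma~\ref{lemma:main1}) by invoking the SVD of $\matA$. First I would write $\matA = \matU_\matA\matS_\matA\matV_\matA\transp$ and use the unitary invariance of the spectral norm (via $\matV_\matA\matV_\matA\transp = \matI$) to collapse
$$\norm{\matA\transp\matA - \Atilde\transp\Atilde} = \norm{\matV_\matA\matS_\matA(\matI - \matU_\matA\transp\matQ\transp\matQ\matU_\matA)\matS_\matA\matV_\matA\transp} = \norm{\matS_\matA^2 - \matS_\matA\matU_\matA\transp\matQ\transp\matQ\matU_\matA\matS_\matA}.$$
This is precisely the quantity controlled by Lemma~\ref{lemma:main1} with $\matU := \matU_\matA$ and $\matS := \matS_\matA$.

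Next I would check that the hypotheses line up. Writing $\u_t\transp$ for the $t$-th row of $\matU_\matA$, we have $\a_t = \matV_\matA\matS_\matA\u_t$, so $\a_t\transp\a_t = \u_t\transp\matS_\matA^2\u_t$, and $\trace(\matS_\matA^2) = \norm{\matA}_F^2$. Hence the hypothesis $p_t \ge \beta\, \a_t\transp\a_t / \norm{\matA}_F^2$ is exactly the hypothesis $p_t \ge \beta\, \u_t\transp\matS_\matA^2\u_t / \trace(\matS_\matA^2)$ required by Lemma~\ref{lemma:main1}. Also $\norm{\matS_\matA}^2 = \norm{\matA}^2$ and $\rho(\matS_\matA) = \norm{\matS_\matA}_F^2/\norm{\matS_\matA}^2 = \norm{\matA}_F^2/\norm{\matA}^2 = \rho_\matA$.

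Finally, applying Lemma~\ref{lemma:main1} with these identifications gives: for $r \ge (4\rho_\matA / \beta\epsilon^2)\ln(2d_1/\delta)$, with probability at least $1-\delta$,
$$\norm{\matS_\matA^2 - \matS_\matA\matU_\matA\transp\matQ\transp\matQ\matU_\matA\matS_\matA} \le \epsilon\norm{\matS_\matA}^2 = \epsilon\norm{\matA}^2,$$
which, combined with the SVD reduction above, yields the theorem.

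There is essentially no obstacle here; the theorem is a direct corollary of Lemma~\ref{lemma:main1}. The only bookkeeping worth being careful about is verifying the identity $\a_t\transp\a_t = \u_t\transp\matS_\matA^2\u_t$ so that the row-norm sampling probabilities stated in the theorem really do coincide with the $\matS^2$-weighted leverage scores appearing in the sampling lemma.
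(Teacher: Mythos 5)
Your proposal is correct and matches the paper's own argument: the paper likewise uses the SVD of \math{\matA} and unitary invariance to rewrite \math{\norm{\matA\transp\matA-\Atilde\transp\Atilde}} as \math{\norm{\matS_\matA^2-\matS_\matA\matU_\matA\transp\matQ\transp\matQ\matU_\matA\matS_\matA}} and then applies Lemma~\ref{lemma:main1}, observing that the leverage-type probabilities \math{\u_t\transp\matS_\matA^2\u_t/\trace(\matS_\matA^2)} coincide with the squared row norms \math{\a_t\transp\a_t/\norm{\matA}_F^2}. Your explicit verification of \math{\a_t\transp\a_t=\u_t\transp\matS_\matA^2\u_t} is exactly the bookkeeping the paper leaves to the reader.
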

Similarly, using the SVDs of \math{\matA} and \math{\matB},
\eqan{
\norm{\matA\transp\matB-\matA\transp\matQ\transp\matQ\matB}
&=&
\norm{\matV_\matA\matS_\matA\matU_\matA\transp\matU_\matB\matS_\matB
\matV_\matB\transp-\matV_\matA\matS_\matA\matU_\matA\transp
\matQ\transp\matQ\matU_\matB\matS_\matB
\matV_\matB\transp},\\
&=&
\norm{\matS_\matA
\matU_\matA\transp\matU_\matB\matS_\matB-\matS_\matA\matU_\matA\transp
\matQ\transp\matQ\matU_\matB\matS_\matB}.
}
We may now directly apply Lemma \ref{lemma:main2},
with respect to the appropriate sampling probabilities. 
One can verify that the sampling probabilities in 
Lemma \ref{lemma:main2} are proportional to the sum of the
rescaled squared norms of the rows of \math{\matA} and \math{\matB}.
\begin{theorem}
\label{theorem:matmult}
Let \math{\matA\in\R^{m\times d_1}} and \math{\matB\in\R^{m\times d_2}},
 have
rescaled rows \math{\hat\a_t=\a_t/\norm{\matA}} and \math{\hat\b_t=
\b_t/\norm{\matB}} respectively.  
Obtain a sampling  matrix
\math{\matQ\in\R^{r\times m}} using 
row-sampling probabilities
\mand{
p_t\ge\beta \frac{\hat\a_t\transp\hat\a_t+
\hat\b_t\transp\hat\b_t}
{\sum_{t=1}^m\hat\a_t\transp\hat\a_t+
\hat\b_t\transp\hat\b_t}
=\beta  \frac{\hat\a_t\transp\hat\a_t+\hat\b_t\transp\hat\b_t}
{\rho_{\matA}+\rho_{\matB}}.
}
Then, if \math{r\ge \frac{8(\rho_\matA+\rho_\matB)}{\beta\epsilon^2}
\ln\frac{2(d_1+d_2)}{\delta}}, with
probability at least \math{1-\delta},
\mand{
\norm{\matA\transp\matB-\Atilde\transp\Btilde}\le
\epsilon\norm{\matA}\norm{\matB}.
}
\end{theorem}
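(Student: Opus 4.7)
The plan is to reduce Theorem \ref{theorem:matmult} directly to Lemma \ref{lemma:main2} via the SVDs of $\matA$ and $\matB$, following the template already sketched for the symmetric case (Theorem \ref{theorem:matmultsym}). The excerpt has done most of the conceptual work; what remains is to verify that the sampling probabilities given in the theorem coincide (up to the stated inequality) with those required by Lemma \ref{lemma:main2}, and then to translate the bound.

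First I would write $\matA = \matU_\matA \matS_\matA \matV_\matA\transp$ and $\matB = \matU_\matB \matS_\matB \matV_\matB\transp$, and use unitary invariance of the spectral norm to strip off the $\matV_\matA$ and $\matV_\matB$ factors, obtaining
\mand{
\norm{\matA\transp\matB - \Atilde\transp\Btilde}
=
\norm{\matS_\matA\matU_\matA\transp\matU_\matB\matS_\matB - \matS_\matA\matU_\matA\transp\matQ\transp\matQ\matU_\matB\matS_\matB}.
}
This is exactly the quantity bounded by Lemma \ref{lemma:main2} with $\matW = \matU_\matA$, $\matV = \matU_\matB$, $\matS_1 = \matS_\matA$, and $\matS_2 = \matS_\matB$. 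Since $\matV_\matA$ is orthogonal we also have $\norm{\matA}_F = \norm{\matS_\matA}_F$ and $\norm{\matA} = \norm{\matS_\matA}$, so $\rho(\matS_\matA) = \rho_\matA$, and similarly $\rho(\matS_\matB) = \rho_\matB$. Hence the hypothesis $r \ge 8(\rho_\matA + \rho_\matB)/(\beta\epsilon^2) \cdot \ln(2(d_1+d_2)/\delta)$ matches the hypothesis of Lemma \ref{lemma:main2} verbatim, and its conclusion gives the spectral bound $\epsilon\norm{\matS_\matA}\norm{\matS_\matB} = \epsilon\norm{\matA}\norm{\matB}$.

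The only step that requires a brief calculation is checking that the probabilities in the theorem statement imply the probabilities required by Lemma \ref{lemma:main2}. Writing $\w_t$ for the $t$-th row of $\matU_\matA$, the relation $\a_t\transp = \w_t\transp \matS_\matA \matV_\matA\transp$ together with $\matV_\matA\transp\matV_\matA = \matI$ yields
\mand{
\w_t\transp\matS_\matA^2\w_t = \a_t\transp\a_t,
\qquad
\frac{1}{\norm{\matS_\matA}^2}\w_t\transp\matS_\matA^2\w_t = \frac{\a_t\transp\a_t}{\norm{\matA}^2} = \hat\a_t\transp\hat\a_t,
}
and similarly $\frac{1}{\norm{\matS_\matB}^2}\v_t\transp\matS_\matB^2\v_t = \hat\b_t\transp\hat\b_t$, where $\v_t$ is the $t$-th row of $\matU_\matB$. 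Summing over $t$ gives $\sum_t \hat\a_t\transp\hat\a_t = \rho_\matA$ and $\sum_t \hat\b_t\transp\hat\b_t = \rho_\matB$, so the denominator $\rho_\matA + \rho_\matB$ in the theorem matches the denominator $\rho_1 + \rho_2$ in Lemma \ref{lemma:main2}. Consequently the probability lower bound in the theorem is identical to the one demanded by the lemma.

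There is essentially no obstacle here beyond careful bookkeeping: the theorem is a corollary of Lemma \ref{lemma:main2} once one identifies $\matU_\matA, \matU_\matB$ as the orthonormal factors and translates the leverage-score-weighted probabilities into the row-norm-based form. The only mild subtlety, worth a sentence in the writeup, is to note that the sampling probabilities in the theorem are stated in terms of the original matrices $\matA, \matB$ (not their SVDs), which is what makes the result usable in practice, and that the equivalence above shows the reduction to Lemma \ref{lemma:main2} loses nothing.
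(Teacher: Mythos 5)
Your proposal is correct and is essentially the paper's own argument: the paper likewise strips the $\matV_\matA,\matV_\matB$ factors by unitary invariance to reduce to $\norm{\matS_\matA\matU_\matA\transp\matU_\matB\matS_\matB-\matS_\matA\matU_\matA\transp\matQ\transp\matQ\matU_\matB\matS_\matB}$ and then invokes Lemma \ref{lemma:main2}, noting that the required leverage-type probabilities are proportional to the rescaled squared row norms. Your explicit verification that $\w_t\transp\matS_\matA^2\w_t=\a_t\transp\a_t$ and that the normalizations sum to $\rho_\matA+\rho_\matB$ is exactly the bookkeeping the paper leaves to the reader.
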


\section{Sparse Row Based Matrix Representation}

Given a matrix \math{\matA=\matU\matS\matV\transp\in\R^{m\times d}}, 
the top \math{k}
singular vectors, corresponding to the top \math{k} singular values give
the best rank \math{k} reconstruction of \math{\matA}. Specifically,
let \math{\matA_k=\matU_k\matS_k\matV_k\transp}, where
\math{\matU_k\in\R^{m\times k}}, \math{\matS_k\in\R^{k\times k}}
and \math{\matV_k\in\R^{d\times k}}; \math{\matU_k} and \math{\matV_k} 
correspond
to the top-\math{k} left and right singular vectors. Then,
\math{\norm{\matA-\matA_k}\le\norm{\matA-\matX}} where 
\math{\matX\in\R^{m\times d}} ranges over all rank-\math{k} matrices.
As usual, let \math{\Atilde=\matQ\matA} be the sampled, rescaled rows
of \math{\matA}, with \math{\Atilde=\Utilde\Stilde\Vtilde\transp}, and
consider the top-\math{k} right singular vectors 
\math{\Vtilde_k}. Let \math{\tilde\Pi_k} be the projection onto this
top-\math{k} right singular  space, and consider the
rank \math{k} approximation to \math{\matA} obtained by projecting
onto this space:
\math{\Atilde_k=\matA\tilde\Pi_k}. The following lemma is useful for 
showing that \math{\Atilde_k} is almost (up to additive error) as good an
approximation to \math{\matA} as one can get.
\begin{lemma}[\citet{drineas2006c}, \citet{rudelson2007}]
\label{lemma:drineas2}
\mand{
\norm{\matA-\Atilde_k}^2
\le\norm{\matA-\matA_k}^2+2\norm{\matA\transp\matA-\Atilde\transp\Atilde}
\le(\norm{\matA-\matA_k}+
\sqrt{2}\norm{\matA\transp\matA-\Atilde\transp\Atilde}^{1/2})^2.
}
\end{lemma}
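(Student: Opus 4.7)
The plan is to express $\norm{\matA-\Atilde_k}^2$ in terms of $\Atilde\transp\Atilde$ and then compare to $\matA\transp\matA$, using the perturbation $\matE = \matA\transp\matA - \Atilde\transp\Atilde$ twice: once as an additive error when swapping $\matA\transp\matA$ for $\Atilde\transp\Atilde$, and once through Weyl's inequality to relate tail singular values of $\Atilde$ to those of $\matA$. The second inequality in the lemma is then just the elementary observation that $a+b\le(\sqrt{a}+\sqrt{b})^2$ for $a,b\ge 0$.

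First I would write $\Atilde_k=\matA\tilde\Pi_k$ so that $\matA-\Atilde_k=\matA(\matI-\tilde\Pi_k)$, and use the identity $\norm{\matM}^2=\norm{\matM\transp\matM}$ to obtain
\mand{
\norm{\matA-\Atilde_k}^2
=\norm{(\matI-\tilde\Pi_k)\matA\transp\matA(\matI-\tilde\Pi_k)}.
}
Substituting $\matA\transp\matA=\Atilde\transp\Atilde+\matE$ and applying the triangle inequality (together with $\norm{(\matI-\tilde\Pi_k)\matE(\matI-\tilde\Pi_k)}\le\norm{\matE}$ since $\matI-\tilde\Pi_k$ is an orthogonal projection) gives
\mand{
\norm{\matA-\Atilde_k}^2
\le\norm{(\matI-\tilde\Pi_k)\Atilde\transp\Atilde(\matI-\tilde\Pi_k)}
+\norm{\matA\transp\matA-\Atilde\transp\Atilde}.
}

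The key step is then to identify the first term. Because $\tilde\Pi_k=\Vtilde_k\Vtilde_k\transp$ projects onto the top-$k$ right singular subspace of $\Atilde$, we have $\Atilde(\matI-\tilde\Pi_k)=\Atilde-\Atilde\tilde\Pi_k$, which is exactly the best rank-$k$ approximation error of $\Atilde$ in the spectral norm. Thus
\mand{
\norm{(\matI-\tilde\Pi_k)\Atilde\transp\Atilde(\matI-\tilde\Pi_k)}
=\norm{\Atilde-\Atilde\tilde\Pi_k}^2=\sigma_{k+1}(\Atilde)^2.
}
By Weyl's inequality applied to the symmetric matrices $\matA\transp\matA$ and $\Atilde\transp\Atilde$,
\mand{
\bigl|\sigma_{k+1}(\matA)^2-\sigma_{k+1}(\Atilde)^2\bigr|
\le\norm{\matA\transp\matA-\Atilde\transp\Atilde},
}
so $\sigma_{k+1}(\Atilde)^2\le\norm{\matA-\matA_k}^2+\norm{\matA\transp\matA-\Atilde\transp\Atilde}$. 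Combining the two bounds yields the first inequality of the lemma, and the second inequality follows by writing $a+b\le a+2\sqrt{ab}+b=(\sqrt{a}+\sqrt{b})^2$ with $a=\norm{\matA-\matA_k}^2$ and $b=2\norm{\matA\transp\matA-\Atilde\transp\Atilde}$.

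The main obstacle is really just keeping the spectral-norm manipulations honest: one must verify that the triangle inequality applied to $(\matI-\tilde\Pi_k)(\Atilde\transp\Atilde+\matE)(\matI-\tilde\Pi_k)$ picks up only $\norm{\matE}$ (not a larger quantity) and that the residual $\Atilde-\Atilde\tilde\Pi_k$ really has spectral norm $\sigma_{k+1}(\Atilde)$. Once Weyl's inequality is in hand, the rest is bookkeeping.
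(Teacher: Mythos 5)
Your proof is correct, and it is precisely the argument the paper intends: the paper's own "proof" is only a one-line sketch citing "standard arguments and an application of a perturbation theory result due to Weyl," which is exactly the decomposition via $\matE=\matA\transp\matA-\Atilde\transp\Atilde$ plus Weyl's inequality that you carry out in full. All the individual steps (the projection identity, the triangle-inequality bound $\norm{(\matI-\tilde\Pi_k)\matE(\matI-\tilde\Pi_k)}\le\norm{\matE}$, the identification of $\sigma_{k+1}(\Atilde)^2$, and the final AM--GM-type rearrangement) check out.
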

\begin{proof}
The proof follows using standard arguments and an application of
a perturbation theory result due to Weyl for bounding the change
in any singular value upon hermitian perturbation of a   
hermitian matrix. 
\end{proof}
Therefore, if we can approximate the matrix product 
\math{\matA\transp\matA}, we immediately get a good reconstruction
for every \math{k}. The appropriate
sampling probabilities from the previous section are
\mand{
p_t\ge\beta\frac{\a_t\transp\a_t}{\norm{\matA}^2_F}.
}
In this case, if \math{r\ge(4\rho/\beta\epsilon^2)\ln\frac{2d}{\delta}}, then
with probability at least \math{1-\delta},
\mand{
\norm{\matA-\Atilde_k}^2\le\norm{\matA-\matA_k}^2+2\epsilon\norm{\matA}^2.
}
The sampling probabilities are easy to compute and sampling can be accomplished
in one pass if the matrix is stored row-by-row.

To get a relative error result, we need a more carefully constructed
set of  non-uniform sampling
probabilities. The problem here becomes apparent if 
\math{\matA} has rank \math{k}. In this case we have no hope of 
a relative error approximation unless we preserve the rank during sampling. 
To do so, we need to sample according to the actual singular
vectors in \math{\matU}, not according to \math{\matA}; this is because
sampling according to \math{\matA} can give especially large weight to 
a few of the large singular value
directions, ignoring the small singular value directions and hence
not preserving rank.
By sampling according to \math{\matU}, we essentially put equal weight
on all singular directions. To approximate \math{\matU} well, we need 
sampling probabilities
\mand{
p_t\ge\frac{\beta}{d}\u_t\transp\u_t.
}
Then, from Corollary \ref{cor:main1}, if 
\math{r\ge (4(d-\beta)/\beta\epsilon^2)\ln\frac{2d}{\delta}}, with
probability at least \math{1-\delta},
\mand{
\norm{\matI-\matU\transp\matQ\transp\matQ\matU}\le\epsilon.
}
Since \math{\norm{\matU}=1}, it also follows that
\mand{
\norm{\matU\matU\transp-\matU\matU\transp\matQ\transp\matQ\matU\matU\transp}
\le\epsilon.
}
This result is useful because of the following lemma.
\begin{lemma}[\citet{spielman08}]
\label{lemma:basic2} If 
\math{\norm{\matU\matU\transp-
\matU\matU\transp\matQ\transp\matQ\matU\matU\transp}
\le\epsilon}, then for every \math{\x\in\R^d},
\mand{
(1-\epsilon)\x\transp\matA\transp\matA\x
\le
\x\transp\Atilde\transp\Atilde\x
\le
(1+\epsilon)\x\transp\matA\transp\matA\x.
}
\end{lemma}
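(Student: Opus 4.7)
The plan is to reduce the conclusion to a spectral norm bound on the $d\times d$ matrix $\matI - \matU\transp\matQ\transp\matQ\matU$, and then sandwich the quadratic form via the SVD of $\matA$. First I would use $\matA=\matU\matS\matV\transp$ to expand both sides of the claim. Setting $\y=\matS\matV\transp\x$, we get $\x\transp\matA\transp\matA\x = \y\transp\y = \norm{\y}^2$, while $\x\transp\Atilde\transp\Atilde\x = \x\transp\matV\matS\matU\transp\matQ\transp\matQ\matU\matS\matV\transp\x = \y\transp\matU\transp\matQ\transp\matQ\matU\y$. Hence
\mand{
\x\transp\matA\transp\matA\x - \x\transp\Atilde\transp\Atilde\x
= \y\transp(\matI - \matU\transp\matQ\transp\matQ\matU)\y.
}

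Next I would rewrite the hypothesis in terms of the $d\times d$ matrix $\matM := \matI - \matU\transp\matQ\transp\matQ\matU$, observing that $\matU\matM\matU\transp = \matU\matU\transp - \matU\matU\transp\matQ\transp\matQ\matU\matU\transp$. Because $\matU$ has orthonormal columns, pre- and post-multiplying by $\matU$ and $\matU\transp$ preserves the nonzero singular values of $\matM$ (write the SVD of $\matM$ and note that $\matU\matU_{\matM}$ and $\matU\matV_{\matM}$ inherit orthonormal columns), so $\norm{\matU\matM\matU\transp} = \norm{\matM}$. The hypothesis therefore gives $\norm{\matI - \matU\transp\matQ\transp\matQ\matU} \le \epsilon$.

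To finish, I would apply the spectral norm bound inside the displayed identity to get $|\y\transp(\matI - \matU\transp\matQ\transp\matQ\matU)\y| \le \epsilon\norm{\y}^2 = \epsilon\,\x\transp\matA\transp\matA\x$, which immediately yields both inequalities in the lemma.

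The main obstacle, such as it is, is purely notational: recognizing that the hypothesis phrased via the $m\times m$ projector $\matU\matU\transp$ is equivalent to a bound on an operator on the $d$-dimensional space where the quadratic form naturally lives. Once that reduction is made the proof is just an SVD change of variables; no probabilistic tools are needed because the randomness has already been absorbed into the hypothesis on $\matQ$.
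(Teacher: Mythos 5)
Your proof is correct and is essentially the paper's own argument (the sketch it attributes to Spielman et al.): both are purely deterministic changes of variables exploiting that \math{\col(\matA)=\col(\matU)} (equivalently \math{\matU\matU\transp\matA=\matA}), so that the hypothesis directly controls \math{\x\transp(\matA\transp\matA-\Atilde\transp\Atilde)\x} relative to \math{\x\transp\matA\transp\matA\x}. The only cosmetic difference is that the paper writes the hypothesis norm as a Rayleigh quotient over \math{\col(\matU)} and substitutes \math{\x=\matA\y} there, whereas you first pass to the \math{d\times d} matrix \math{\matI-\matU\transp\matQ\transp\matQ\matU} via the (correct) observation that conjugation by an orthonormal \math{\matU} preserves the spectral norm, and then change variables by \math{\y=\matS\matV\transp\x}; your packaging has the very minor advantage of not needing the standing assumption \math{\rank(\matA)=d}, which the paper's step \math{\matA\y\not=\bm0\iff\y\not=\bm0} uses.
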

\begin{proof}
We give a sketch of the proof from \citet{spielman08}.
We let \math{\x\not=\bm0} 
range over \math{\col(\matU)}. Since \math{\col(\matU)=\col(\matA)},
\math{\x\in\col(\matU)} if and only if 
for some \math{\y\in\R^d}, \math{\x=\matA\y}. Since \math{\rank(\matA)=d},
\math{\matA\y\not=0\iff\y\not=0}.
Also note that \math{\matU\matU\transp\matA=\matA}, since
\math{\matU\matU\transp} is a projection operator onto the
column space of \math{\matU}, which is the same as the column
space of \math{\matA}. The following sequence 
establishes the lemma.
\eqan{
\norm{\matU\matU\transp-
\matU\matU\transp\matQ\transp\matQ\matU\matU\transp}
&=&
\sup_{\x\not=\bm0}\frac{|\x\transp\matU\matU\transp\x-
\x\transp\matU\matU\transp\matQ\transp\matQ\matU\matU\transp\x|}
{\x\transp\x},\\
&=&
\sup_{\matA\y\not=\bm0}\frac{|\y\transp\matA\transp
\matU\matU\transp\matA\y-
\y\transp\matA\transp\matU\matU\transp\matQ\transp\matQ\matU\matU\transp
\matA\y|}
{\y\transp\matA\transp\matA\y},\\
&=&
\sup_{\matA\y\not=\bm0}\frac{|\y\transp\matA\transp
\matA\y-
\y\transp\matA\transp\matQ\transp\matQ
\matA\y|}
{\y\transp\matA\transp\matA\y},\\
&=&
\sup_{\y\not=\bm0}\frac{|\y\transp\matA\transp
\matA\y-
\y\transp\Atilde\transp\Atilde\y|}
{\y\transp\matA\transp\matA\y},\\
}
The 
lemma now follows  because
\math{\norm{\matU\matU\transp-
\matU\matU\transp\matQ\transp\matQ\matU\matU\transp}\le\epsilon}.
\end{proof}
Via the Courant-Fischer characterization \cite{golub1983}
of the 
singular values, it is immediate from Lemma~\ref{lemma:basic2}
that the singular value spectrum
is also preserved :
\mld{
(1-\epsilon)\sigma_i(\matA\transp\matA)
\le
\sigma_i(\Atilde\transp\Atilde)
\le
(1+\epsilon)\sigma_i(\matA\transp\matA).
\label{eq:stablesing}
}
Lemma \ref{lemma:basic2} along with \r{eq:stablesing} will allow us
to prove the relative approximation result.
\begin{theorem}
If \math{p_t\ge\frac{\beta}{d}\u_t\transp\u_t} and
\math{r\ge(4(d-\beta)/\beta\epsilon^2)\ln\frac{2d}{\epsilon}}, then,
for \math{k=1,\ldots,d},
\mand{
\norm{\matA-\matA\tilde\Pi_k}\le
\left(\frac{1+\epsilon}{1-\epsilon}\right)^{1/2}
\norm{\matA-\matA_k},
}
where \math{\tilde\Pi_k} projects onto the top \math{k} right singular vectors
of \math{\Atilde}.
\end{theorem}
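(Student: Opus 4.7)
The plan is to chain together the three tools already assembled before the statement: Corollary \ref{cor:main1}, Lemma \ref{lemma:basic2}, and the singular-value stability bound \r{eq:stablesing}. The guiding identity is that, since $\tilde\Pi_k$ projects onto the top-$k$ right singular subspace of $\Atilde$, the matrix $\matI-\tilde\Pi_k$ projects onto the span of the bottom $d-k$ right singular vectors of $\Atilde$. Thus
\mand{
\norm{\matA-\matA\tilde\Pi_k}^2
=
\norm{\matA(\matI-\tilde\Pi_k)}^2
=
\sup_{\x\in\col(\matI-\tilde\Pi_k),\ \norm{\x}=1}\x\transp\matA\transp\matA\x.
}
The strategy is to transfer this supremum from $\matA\transp\matA$ to $\Atilde\transp\Atilde$, where the subspace in question is tailor-made (it is the bottom singular subspace of $\Atilde$), and then transfer the resulting singular-value bound back to $\matA$.

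First, invoke Corollary \ref{cor:main1}: with probability at least $1-\delta$ (and $\delta$ absorbed into the choice of $r$), $\norm{\matI-\matU\transp\matQ\transp\matQ\matU}\le\epsilon$, which as noted in the excerpt implies $\norm{\matU\matU\transp-\matU\matU\transp\matQ\transp\matQ\matU\matU\transp}\le\epsilon$. Second, apply Lemma \ref{lemma:basic2} to conclude that for every $\x\in\R^d$,
\mand{
\x\transp\matA\transp\matA\x
\le
\frac{1}{1-\epsilon}\,\x\transp\Atilde\transp\Atilde\x.
}
Restricting to unit $\x$ in the bottom-$(d-k)$ right singular subspace of $\Atilde$ gives $\x\transp\Atilde\transp\Atilde\x\le\sigma_{k+1}^2(\Atilde)$.

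Third, use \r{eq:stablesing} to relate the $(k+1)$-th singular value of $\Atilde$ to that of $\matA$:
\mand{
\sigma_{k+1}^2(\Atilde\transp\Atilde)\le(1+\epsilon)\sigma_{k+1}^2(\matA\transp\matA),
}
and identify $\sigma_{k+1}^2(\matA)=\norm{\matA-\matA_k}^2$ (the best rank-$k$ error in the spectral norm is the $(k+1)$-st singular value). Chaining the three inequalities,
\mand{
\norm{\matA-\matA\tilde\Pi_k}^2
\le
\frac{\sigma_{k+1}^2(\Atilde)}{1-\epsilon}
\le
\frac{1+\epsilon}{1-\epsilon}\,\norm{\matA-\matA_k}^2,
}
and taking square roots yields the claim.

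There is no real obstacle here, since all the heavy lifting is done by Lemma \ref{lemma:main1} and its downstream consequences; the only thing to be a little careful about is ensuring that the subspace over which one takes the supremum is exactly the complement of the top-$k$ singular subspace of $\Atilde$ (not of $\matA$), so that restricting $\x$ there gives a clean bound in terms of $\sigma_{k+1}(\Atilde)$ rather than some skewed quantity. Once that alignment is made, the relative error factor $\sqrt{(1+\epsilon)/(1-\epsilon)}$ emerges directly from the product of the two one-sided distortion bounds.
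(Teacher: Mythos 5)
Your proof is correct and follows essentially the same route as the paper's: rewrite $\norm{\matA-\matA\tilde\Pi_k}^2$ as a supremum of $\x\transp\matA\transp\matA\x$ over $\ker(\tilde\Pi_k)$, transfer to $\x\transp\Atilde\transp\Atilde\x$ via Lemma \ref{lemma:basic2} at the cost of $1/(1-\epsilon)$, identify the supremum as $\sigma_{k+1}(\Atilde\transp\Atilde)$, and return to $\sigma_{k+1}(\matA\transp\matA)=\norm{\matA-\matA_k}^2$ via \r{eq:stablesing}. The only blemish is a notational slip where you write $\sigma_{k+1}^2(\Atilde\transp\Atilde)$ for what should be $\sigma_{k+1}(\Atilde\transp\Atilde)=\sigma_{k+1}^2(\Atilde)$; the substance is unaffected.
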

\paragraph{Remarks} For \math{\epsilon\le\frac12},
 \math{\left(\frac{1+\epsilon}{1-\epsilon}\right)^{1/2}\le 1+2\epsilon}.
Computing the probabilities \math{p_t} involves knowing \math{\u_t} which
means one has to perform an \math{SVD}, in which case, one
could use \math{\matA_k}; it seems like overkill to 
compute \math{\matA_k} in order to approximate \math{\matA_k}. 
We discuss approximate sampling schemes later,
in Section \ref{section:sampling}.
\begin{proof}
Let \math{\norm{\x}=1}. The following sequence establishes the result.
\eqan{
\norm{\matA(\matI-\tilde\Pi_k)}^2
&=&
\sup_{\x\in\ker(\tilde\Pi_k)}\norm{\matA\x}^2=
\sup_{\x\in\ker(\tilde\Pi_k)}\x\transp\matA\transp\matA\x,\\
&\le&
\frac{1}{1-\epsilon}
\sup_{\x\in\ker(\tilde\Pi_k)}\x\transp\Atilde\transp\Atilde\x,\\
&=&
\frac{1}{1-\epsilon}\sigma_{k+1}(\Atilde\transp\Atilde),\\
&\le&
\frac{1+\epsilon}{1-\epsilon}\sigma_{k+1}(\matA\transp\matA)
=\frac{1+\epsilon}{1-\epsilon}\norm{\matA-\matA_k}^2.
}
\end{proof}

\section{\math{\ell_2} Linear Regression with Relative Error Bounds}

A linear regression is represented by a real data matrix 
\math{\A\in\R^{m\times d}} which represents 
\math{m} points in \math{\R^d}, and a target vector \math{\y\in\R^m}.
Traditionally, \math{m\gg d} (severly
over constrained regression).
The goal is to find a regression vector \math{\x^*\in\R^2} which minimizes the
\math{\ell_2} fit error (least squares regression)
\mand{\E(\x)=\norm{\A\x-\y}_2^2=\sum_{t=1}^m(\a_t\transp\x-y_t)^2,}
We assume such an optimal \math{\x^*} exists (it may not be unique
unless \math{\A} has full column rank), and is given by 
\math{\x^*=\A^+\y}, where \math{{}^+} denotes the 
More-Penrose pseudo-inverse;
this problem can be solved in \math{O(md^2)}.
Through row-sampling,
it is 
possible to construct \math{\xhat},
an approximation  to the optimal
regression weights~\math{\x^*}, which is a relative
error approximation to optimal,
\mand{
\E(\xhat)\le(1+\epsilon)\E(\x^*).
}
As usual, let \math{\matA=\matU_\matA\matS_\matA\matV_\matA\transp}.
Then \math{\matA^{+}=\matV_\matA\matS_\matA^{-1}\matU_\matA\transp}, and so 
\math{\x^*=\matV\matS^{-1}\matU\transp\y}. The predictions are
\math{\y^*=\matA\x^*=\matU_\matA\matU_\matA\transp\y}, which 
 is the projection
of \math{\y} onto the column space of \math{\matA}. We define the residual
\math{\bm\epsilon=\y-\y^*=
\y-\matA\x^*=(\matI-\matU_\matA\matU_\matA\transp)\y}, so
\mld{
\y=\matU_\matA\matU_\matA\transp\y+\bm\epsilon.
\label{eq:defepsilon}
}
We will construct
\math{\Atilde} and \math{\tilde\y} by sampling rows:
\mand{
[\Atilde,\tilde\y]=\matQ[\matA,\y],
}
and solve the linear regression problem on 
\math{(\Atilde,\tilde\y)} to obtain
\math{\xhat=\Atilde^{+}\tilde\y}.
For \math{\beta\in(0,\frac13]},
we will use the sampling probabilities 
\mld{
p_t\ge
{\beta}\left(\frac{\u_t^2}{d}+
\frac{
(\u_t^2+
\frac{\epsilon_t^2}{\bm\epsilon\transp\bm\epsilon})}{d+1}+
\frac{\epsilon_t^2}{\bm\epsilon\transp\bm\epsilon}
\right)
\label{eq:LAsamp}
}
to construct \math{\Atilde} and \math{\tilde\y}. There are three parts
to these sampling probabilities. The first part allows us to 
reconstruct \math{\matA} well from \math{\Atilde}; the second allows
us to reconstruct \math{\matA\transp\bm\epsilon}; and, 
the third allows us to reconstruct \math{\bm\epsilon}.

Note that \math{\Atilde=\matQ\matU_\matA\matS_\matA{\matV_\matA}\transp};
if \math{\matQ\matU_\matA} consisted of orthonormal columns, then 
this would be the SVD of \math{\Atilde}. Indeed, this is approximately so,
as we will soon see.
Let 
the  SVD of \math{\Atilde} be \math{\Atilde=\matU_{\Atilde}\matS_{\Atilde}
{\matV_{\Atilde}}\transp}. 
Let \math{\Utilde=\matQ\matU_\matA}.
Since \math{p_t\ge\beta\u_t^2/d}, it follows from
Corollary \ref{cor:main1} that if 
\math{r\ge 2\frac{d-\beta}{\beta\epsilon^2}}, for \math{\epsilon\in(0,1)},
 then, with high probability,
\mand{
\norm{\matI-\Utilde\transp\Utilde}\le\epsilon.
}
Since the eigenvalues of \math{\matI-\Utilde\transp\Utilde} are given by
\math{1-\sigma_i^2(\Utilde)}, it follows that 
\mand{
1-\epsilon<\sigma^2_i(\Utilde)< 1+\epsilon.
}
So all the 
singular values of \math{\matU_\matA} 
are preserved after sampling. Essentially, it 
suffices to sample \math{r=O(d\ln d/\epsilon^2)} rows to preserve
the entire spectrum of \math{\matU_\matA}.
By choosing (say) \math{\epsilon=\frac12},
the rank of \math{\matU_\matA} 
is preserved with high probability, since all 
the singular values are bigger than \math{\frac12}. 
Thus,
with high probability,
\math{\rank(\Atilde)=\rank(\matU_{\Atilde})=\rank(\matQ\matU_{\matA})=
\rank(\matU_{\matA})=\rank(\matA).}
Since \math{\matQ\matU_\matA} 
has full rank, \math{\matS_{\matQ\matU_\matA}^{-1}} is
defined, 
and \math{\matS_{\matQ\matU_\matA}-\matS_{\matQ\matU_\matA}^{-1}} is a 
diagonal matrix whose diagonals are 
\math{(\sigma_i^2(\Utilde)-1)/\sigma_i(\Utilde)};
thus,
\math{\norm{\matS_{\matQ\matU_\matA}-\matS_{\matQ\matU_\matA}^{-1}}_2\le
\epsilon/
\sqrt{1-\epsilon}}. This allows us to quantify the degree to which 
\math{\matQ\matU_\matA} is orthonormal, because
\eqan{
\norm{(\matQ\matU_\matA)^+-(\matQ\matU_\matA)\transp}_2
&=&\norm{\matV_{\matQ\matU_\matA}\matS_{\matQ\matU_\matA}^{-1}
{\matU_{\matQ\matU_\matA}}\transp-
\matV_{\matQ\matU_\matA}\matS_{\matQ\matU_\matA}
\matU_{\matQ\matU_\matA}\transp}_2\nonumber\\
&=&\norm{\matS_{\matQ\matU_\matA}^{-1}-\matS_{\matQ\matU_\matA}}_2
\le\frac{\epsilon}{\sqrt{1-\epsilon}}.\label{eq:nearorth}
}
Finally, we can get a convenient form for \math{\Atilde^+=(\matQ\matA)^+}, 
because
\math{\matQ\matA=\matQ\matU_\matA\matS_\matA\matV_\matA\transp} has full
rank, and so \math{\matQ\matU_\matA=\matU_{\matQ\matU_\matA}
\matS_{\matQ\matU_\matA}\matV_{\matQ\matU_\matA}\transp} has full rank (and
hence is the product of full rank matrices).
Thus,
\eqan{
(\matQ\matA)^+
&=&
(\matU_{\matQ\matU_\matA}
\matS_{\matQ\matU_\matA}\matV_{\matQ\matU_\matA}\transp
\matS_\matA\matV_\matA\transp)^+,
\\
&=&
\matV_\matA(\matS_{\matQ\matU_\matA}\matV_{\matQ\matU_\matA}\transp
\matS_\matA)^+\matU_{\matQ\matU_\matA}\transp,\\
&=&
\matV_\matA\matS_\matA^{-1}
\matV_{\matQ\matU_\matA}\matS_{\matQ\matU_\matA}^{-1}
\matU_{\matQ\matU_\matA}\transp,\\
&=&
\matV_\matA\matS_\matA^{-1}(\matQ\matU_\matA)^+,\\
}
We summarize all this information in the next lemma.
\begin{lemma}
\label{lemma:LA2}
If  \math{r\ge (4d/\beta\epsilon^2)\ln\frac{2d}{\delta}}, 
with probability at least \math{1-\delta}, 
all of the following hold:
\eqar{
&&\rank(\Atilde)=\rank(\matU_{\Atilde})=\rank(\matQ\matU_{\matA})=
\rank(\matU_{\matA})=\rank(\matA);\label{lemma:LR2a}\\
&&\norm{\matS_{\matQ\matU_\matA}-\matS_{\matQ\matU_\matA}^{-1}}_2\le
\epsilon/\sqrt{1-\epsilon};\label{lemma:LR2b}\\
&&\norm{(\matQ\matU_\matA)^+-(\matQ\matU_\matA)\transp}_2
\le \epsilon/\sqrt{1-\epsilon};
\label{lemma:LR2c}\\
&&(\matQ\matA)^+=\matV_\matA\matS_\matA^{-1}(\matQ\matU_\matA)^+.
\label{lemma:LR2d}
}
\end{lemma}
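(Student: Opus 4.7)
The plan is to reduce every claim to a single invocation of Corollary~\ref{cor:main1} applied to $\matU_\matA$, and then to chase the consequences through routine SVD bookkeeping. First, since $p_t\ge(\beta/d)\u_t\transp\u_t$ and $r\ge(4d/\beta\epsilon^2)\ln(2d/\delta)\ge(4(d-\beta)/\beta\epsilon^2)\ln(2d/\delta)$, Corollary~\ref{cor:main1} yields, with probability at least $1-\delta$,
$$\norm{\matI - \Utilde\transp\Utilde}\le\epsilon,\qquad \Utilde=\matQ\matU_\matA.$$
The eigenvalues of $\matI-\Utilde\transp\Utilde$ are $1-\sigma_i^2(\Utilde)$, so
$$1-\epsilon\ <\ \sigma_i^2(\Utilde)\ <\ 1+\epsilon \quad\text{for all }i.$$

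From here I would read off each claim in turn. For \r{lemma:LR2a}, the bound $\sigma_i(\Utilde)>\sqrt{1-\epsilon}>0$ forces $\rank(\Utilde)=d$; since $\matQ\matA=\Utilde\matS_\matA\matV_\matA\transp$ is a product of three matrices of rank $d$ (using $\rank(\matA)=d$ so $\matS_\matA$ is invertible), we also get $\rank(\Atilde)=\rank(\matU_{\Atilde})=d$, giving the full chain of equalities. For \r{lemma:LR2b}, $\matS_{\Utilde}-\matS_{\Utilde}^{-1}$ is diagonal with entries $(\sigma_i^2(\Utilde)-1)/\sigma_i(\Utilde)$; their magnitudes are bounded by $\epsilon/\sqrt{1-\epsilon}$, since the numerator is at most $\epsilon$ in absolute value and the denominator is at least $\sqrt{1-\epsilon}$.

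For \r{lemma:LR2c}, write the SVD $\Utilde=\matU_{\Utilde}\matS_{\Utilde}\matV_{\Utilde}\transp$; then $\Utilde^+=\matV_{\Utilde}\matS_{\Utilde}^{-1}\matU_{\Utilde}\transp$ and $\Utilde\transp=\matV_{\Utilde}\matS_{\Utilde}\matU_{\Utilde}\transp$, so
$$\Utilde^+ - \Utilde\transp = \matV_{\Utilde}\bigl(\matS_{\Utilde}^{-1}-\matS_{\Utilde}\bigr)\matU_{\Utilde}\transp,$$
whose spectral norm equals $\norm{\matS_{\Utilde}^{-1}-\matS_{\Utilde}}_2$ by orthonormal invariance, and the bound then follows from \r{lemma:LR2b}. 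Finally, for \r{lemma:LR2d}, because $\matQ\matA=\Utilde\matS_\matA\matV_\matA\transp$ is a product of three matrices whose composition has full column rank $d$ (by \r{lemma:LR2a}), the Moore--Penrose identity $(XYZ)^+ = Z^+Y^+X^+$ applies, giving
$$(\matQ\matA)^+ = (\matV_\matA\transp)^+\matS_\matA^{-1}\Utilde^+ = \matV_\matA\matS_\matA^{-1}\Utilde^+.$$

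The main obstacle is essentially accounting rather than analytic: making sure the ``$r\ge 4d/\beta\epsilon^2 \ln(2d/\delta)$'' hypothesis meets the slightly different ``$4(d-\beta)/\beta\epsilon^2\ln(2d/\delta)$'' threshold of Corollary~\ref{cor:main1} (it does, since $d-\beta<d$), and verifying the full-rank conditions needed to factor the pseudoinverse in \r{lemma:LR2d}. Once the single Corollary is applied and the singular value bounds extracted, each of the four items follows mechanically.
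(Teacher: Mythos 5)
Your proposal is correct and follows essentially the same route as the paper: a single application of Corollary~\ref{cor:main1} to $\matU_\matA$ (using $d-\beta\le d$ to meet the threshold), the eigenvalue identification $1-\sigma_i^2(\matQ\matU_\matA)$ to get $1-\epsilon<\sigma_i^2(\matQ\matU_\matA)<1+\epsilon$, and then the same singular-value bookkeeping for \r{lemma:LR2a}--\r{lemma:LR2d}. One small caveat on \r{lemma:LR2d}: the reverse-order law $(XYZ)^+=Z^+Y^+X^+$ is not justified merely by the product having full column rank; it is valid here because $\matQ\matU_\matA$ has full column rank and $\matS_\matA$, $\matV_\matA\transp$ are invertible, which is exactly what the paper makes explicit by writing out the SVD of $\matQ\matU_\matA$ and peeling off the orthonormal factors.
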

In Lemma \ref{lemma:LA2}
we have simplified the constant to \math{4};
this is a strengthened form of
Lemma 4.1 in \citet{drineas2006}; in particular, the dependence on 
\math{d} is near-linear.

Remember that \math{\hat\x=\Atilde^+\tilde\y}; we now bound
\math{\norm{A\hat\x-\y}^2}. We only sketch the derivation
which basically follows the line of reasoning in \citet{drineas2006}. Under
the conditions of Lemma \ref{lemma:LA2}, with probability at least
\math{1-\delta},
\eqan{
\norm{\matA\hat\x-\y}
&=&
\norm{\matA\Atilde^+\tilde\y-\y}
\ =\ 
\norm{\matA(\matQ\matA)^+\matQ\y-\y}\\
&\buildrel (a)\over =&
\norm{\matU_\matA(\matQ\matU_\matA)^+\matQ\y-\y}\\
&\buildrel (b)\over =&
\norm{\matU_\matA(\matQ\matU_\matA)^+\matQ(\matU_\matA\matU_\matA\transp\y+
\bm\epsilon)-\matU_\matA\matU_\matA\transp\y-\bm\epsilon}\\
&\buildrel (c)\over =&
\norm{\matU_\matA(\matQ\matU_\matA)^+\matQ\bm\epsilon-\bm\epsilon}\\
&=&
\norm{\matU_\matA((\matQ\matU_\matA)^+-(\matQ\matU_\matA)\transp)
\matQ\bm\epsilon+
\matU_\matA(\matQ\matU_\matA)\transp
\matQ\bm\epsilon-\bm\epsilon}\\
&\buildrel (d)\over \le&
\norm{(\matQ\matU_\matA)^+-(\matQ\matU_\matA)\transp}
\norm{\matQ\bm\epsilon}+
\norm{\matU_\matA\transp\matQ\transp
\matQ\bm\epsilon}+\norm{\bm\epsilon}\\
&\buildrel (e)\over \le&
\frac{\epsilon}{\sqrt{1-\epsilon}}
\norm{\matQ\bm\epsilon}+
\norm{\matU_\matA\transp\matQ\transp
\matQ\bm\epsilon}+\norm{\bm\epsilon}.
}
(a) follows from Lemma \ref{lemma:LA2}; (b) follows from \r{eq:defepsilon};
(c) follows Lemma \ref{lemma:LA2}, because \math{\matQ\matU_\matA} has
full rank and so \math{(\matQ\matU_\matA)^+\matQ\matU_\matA=\matI_d};
(d) follows from the triangle inequality and sub-multiplicativity
using \math{\norm{\matU_\matA}=1}; finally, (e) follows from Lemma
\ref{lemma:LA2}. We now see the rationale for the complicated
sampling probabilities. Since
\math{p_t\ge\epsilon_t^2/\bm\epsilon\transp\bm\epsilon}, for \math{r} large 
enough, by Theorem  
\ref{theorem:matmultsym}, \math{\norm{\matQ\bm\epsilon}^2\le
\norm{\bm\epsilon}^2
(1+\epsilon)}. Similarly, since \math{\matU_\matA\transp\bm\epsilon=0},
\math{\norm{\matU_\matA\transp\matQ\transp
\matQ\bm\epsilon}=\norm{\matU_\matA\transp\bm\epsilon-
\matU_\matA\transp\matQ\transp
\matQ\bm\epsilon}}; so,
we can apply Lemma \ref{lemma:main2} with 
\math{\matS_1=\matI_d}, \math{\matV=\bm\epsilon/\norm{\bm\epsilon}} and
\math{\matS_2=\norm{\bm\epsilon}}.
According to Lemma \ref{lemma:main2}, if
\math{p_t\ge \beta(\u_t^2+\epsilon_t^2/\bm\epsilon\transp\bm\epsilon)/(d+1)},
then if \math{r} is large enough, 
\math{\norm{\matU_\matA\transp\matQ\transp
\matQ\bm\epsilon}\le \epsilon\norm{\bm\epsilon}}.
Since these are all probabilistic statements, we need
to apply the union bound to ensure that all of them hold. Ultimately,
we have the claimed result:
\begin{theorem}
\label{theorem:LRrelerr}
For sampling probabilities satisfying \r{eq:LAsamp}, and for  
\math{r\ge(8(d+1)/\beta\epsilon^2)\ln\frac{2(d+1)}{\delta}}, let
\math{\hat\x=(\matQ\matA)^+\matQ\y} be the approximate regression. Then,
with probability at least \math{1-3\delta},
\mand{
\norm{\matA\hat\x-\y}\le \left(1+
\epsilon+\epsilon\sqrt{\frac{1+\epsilon}{1-\epsilon}}
\right)
\norm{\matA\x^*-\y},
}
where \math{\x^*=\matA^+\y} is the optimal regression.
\end{theorem}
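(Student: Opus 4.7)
The plan is to follow the decomposition already hinted at in the discussion preceding the theorem, but organize it so that the role of each of the three summands in the sampling probability \r{eq:LAsamp} is transparent. First I would invoke Lemma \ref{lemma:LA2}: because $p_t \geq \beta\u_t^2/d$ is one of the three contributions in \r{eq:LAsamp}, and $r$ is chosen in accordance with that lemma, with probability at least $1-\delta$ we have that $\matQ\matU_\matA$ has full rank, $(\matQ\matA)^+ = \matV_\matA\matS_\matA^{-1}(\matQ\matU_\matA)^+$, and $\norm{(\matQ\matU_\matA)^+ - (\matQ\matU_\matA)\transp} \leq \epsilon/\sqrt{1-\epsilon}$. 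These are the structural ingredients that let the subsequent algebra go through.

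Next I would plug $\hat\x = (\matQ\matA)^+\matQ\y$ into $\matA\hat\x - \y$, substitute $\y = \matU_\matA\matU_\matA\transp\y + \bm\epsilon$ from \r{eq:defepsilon}, and use $(\matQ\matU_\matA)^+(\matQ\matU_\matA) = \matI_d$ (valid since $\matQ\matU_\matA$ has full column rank) to cancel the $\matU_\matA\matU_\matA\transp\y$ contribution. This reduces the quantity of interest to $\matU_\matA(\matQ\matU_\matA)^+\matQ\bm\epsilon - \bm\epsilon$. Adding and subtracting $\matU_\matA(\matQ\matU_\matA)\transp\matQ\bm\epsilon$, then applying the triangle inequality together with sub-multiplicativity and $\norm{\matU_\matA}=1$, yields
$$\norm{\matA\hat\x - \y} \le \norm{(\matQ\matU_\matA)^+ - (\matQ\matU_\matA)\transp}\,\norm{\matQ\bm\epsilon} + \norm{\matU_\matA\transp\matQ\transp\matQ\bm\epsilon} + \norm{\bm\epsilon}.$$

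The third step is to bound $\norm{\matQ\bm\epsilon}$ and $\norm{\matU_\matA\transp\matQ\transp\matQ\bm\epsilon}$, and this is where the other two contributions in \r{eq:LAsamp} earn their keep. Since $p_t \geq \beta\epsilon_t^2/\bm\epsilon\transp\bm\epsilon$, Theorem \ref{theorem:matmultsym} applied to the $m\times 1$ matrix $\bm\epsilon$ gives $\norm{\matQ\bm\epsilon}^2 \le (1+\epsilon)\norm{\bm\epsilon}^2$ with probability $\ge 1-\delta$. For the middle term, the key observation is that $\matU_\matA\transp\bm\epsilon = 0$ (since $\bm\epsilon$ lies in $\col(\matU_\matA)^\perp$), so $\matU_\matA\transp\matQ\transp\matQ\bm\epsilon = \matU_\matA\transp\matQ\transp\matQ\bm\epsilon - \matU_\matA\transp\bm\epsilon$, which is exactly the error in approximating the product $\matU_\matA\transp\bm\epsilon$ by its sampled version. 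I would then apply Lemma \ref{lemma:main2} with $\matW = \matU_\matA$, $\matS_1 = \matI_d$, $\matV = \bm\epsilon/\norm{\bm\epsilon}$, and $\matS_2 = \norm{\bm\epsilon}$; the resulting sampling probabilities match the middle term of \r{eq:LAsamp}, and the prescribed $r$ gives $\norm{\matU_\matA\transp\matQ\transp\matQ\bm\epsilon} \le \epsilon\norm{\bm\epsilon}$ with probability $\ge 1-\delta$.

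Finally, I would combine the three high-probability events by a union bound (losing $3\delta$ rather than $\delta$, which is why the theorem states $1-3\delta$) and assemble the pieces:
$$\norm{\matA\hat\x - \y} \le \tfrac{\epsilon}{\sqrt{1-\epsilon}}\sqrt{1+\epsilon}\,\norm{\bm\epsilon} + \epsilon\norm{\bm\epsilon} + \norm{\bm\epsilon} = \Bigl(1 + \epsilon + \epsilon\sqrt{\tfrac{1+\epsilon}{1-\epsilon}}\Bigr)\norm{\matA\x^*-\y}.$$
The main obstacle, in my view, is not any single estimate but the bookkeeping: recognizing which of the three summands in \r{eq:LAsamp} is responsible for which of the three high-probability guarantees (subspace preservation, residual-norm preservation, residual-column-space cross term), and choosing $\matS_1,\matV,\matS_2$ correctly so that Lemma \ref{lemma:main2} applies cleanly to the cross term $\matU_\matA\transp\matQ\transp\matQ\bm\epsilon$. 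The rest is triangle inequality and sub-multiplicativity.
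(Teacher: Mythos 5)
Your proposal is correct and follows essentially the same route as the paper: Lemma \ref{lemma:LA2} for the structural facts, the identical reduction of $\matA\hat\x-\y$ to $\matU_\matA(\matQ\matU_\matA)^+\matQ\bm\epsilon-\bm\epsilon$ via \r{eq:defepsilon}, the same add-and-subtract of $\matU_\matA(\matQ\matU_\matA)\transp\matQ\bm\epsilon$, and the same applications of Theorem \ref{theorem:matmultsym} to $\bm\epsilon$ and of Lemma \ref{lemma:main2} with $\matS_1=\matI_d$, $\matV=\bm\epsilon/\norm{\bm\epsilon}$, $\matS_2=\norm{\bm\epsilon}$, finished by a union bound over the three events. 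In fact your write-up is somewhat more explicit than the paper's own sketch about which summand of \r{eq:LAsamp} drives which high-probability event.
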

\paragraph{Remarks} For the proof of the theorem, we observe that any 
transformation matrix \math{Q} satisfying the following three properties
with high probability will do:
\mand{
(\rn{1})\norm{\matI-\matU\transp\matQ\transp\matQ\matU}\le\epsilon;
\qquad
(\rn{2})
\norm{\matQ\bm\epsilon}\le(1+\epsilon)\norm{\bm\epsilon};
\qquad
(\rn{3})
\norm{
\matU\transp\matQ\transp\matQ\bm\epsilon}\le\epsilon\norm{\bm\epsilon}.
}

\section{Estimating the Spectral Norm}
\label{section:sampling}

The row-norm based sampling is relatively straightforward for the
symmetric product. For the asymmetric product, \math{\matA\transp\matB},
we need probabilities
\mld{
p_t\ge
\beta  \frac{\frac{1}{\norm{\matA}^2}\a_t\transp\a_t+\frac{1}{\norm{\matB}^2}\b_t\transp\b_t}
{\rho_{\matA}+\rho_{\matB}}. \label{eq:ptreq}
}
To get these probabilities, we need \math{\norm{\matA}} and 
\math{\norm{\matB}}; since we can compute the exact product in 
\math{O(md_1d_2)}, a practically useful algorithm would need to estimate
\math{\norm{\matA}} and \math{\norm{\matB}} efficiently.
Suppose we had estimates \math{\lambda_\matA,\lambda_\matB} which satisfy:
\mand{
(1-\epsilon)\norm{\matA}^2\le\lambda_\matA^2\le(1+\epsilon)\norm{\matA}^2;
\qquad
(1-\epsilon)\norm{\matB}^2\le\lambda_\matB^2\le(1+\epsilon)\norm{\matB}^2.
}
We can construct probabilities satisfying the desired property
with 
\math{\beta=(1-\epsilon)/(1+\epsilon)}.
\eqan{
p_t
&=&
\frac{\frac{1}{\lambda_\matA^2}\a_t\transp\a_t+\frac{1}{\lambda_\matB^2}\b_t\transp\b_t}
{{\norm{\matA}^2_F}/{\lambda_\matA^2}+
{\norm{\matB}^2_F}/{\lambda_\matB^2}}\\
&\ge&
\frac{\frac{1}{(1+\epsilon)\norm{\matA}^2}\a_t\transp\a_t+\frac{1}{(1+\epsilon)\norm{\matA}^2}\b_t\transp\b_t}
{{\norm{\matA}^2_F}/{(1-\epsilon)\norm{\matA}^2}+
{\norm{\matB}^2_F}/{(1-\epsilon)\norm{\matA}^2}}\\
&=&
\left(\frac{1-\epsilon}{1+\epsilon}\right)
\frac{\frac{1}{\norm{\matA}^2}\a_t\transp\a_t+\frac{1}{\norm{\matB}^2}\b_t\transp\b_t}
{\rho_{\matA}+\rho_{\matB}}.
}
One practical way to obtain \math{\norm{\matA}^2} is using the power iteration.
Given an arbitrary unit vector \math{\x_0}, 
for \math{n\ge1}, let 
\math{\x_n=\matA\transp\matA\x_{n-1}/\norm{\matA\transp\matA\x_{n-1}}}.
Note that multiplying by \math{\matA\transp\matA} can be done in 
\math{O(2md_1)} operations. Since \math{\x_n} is a 
unit vector,
\math{\norm{\matA\transp\matA\x_{n}}\le\norm{\matA}^2}. We now get a lower 
bound.
Let \math{\x_0} be a random isotropic vector constructed using 
\math{d_1} independent standard Normal variates \math{z_1,\ldots,z_{d_1}};
so \math{\x_0\transp=[z_1,\ldots,z_{d_1}]/\sqrt{z_1^2+\cdots+z_{d_1}^2}}.
Let \math{\lambda_n^2=\norm{\matA\transp\matA\x_{n}}} be an estimate
for \math{\norm{\matA}^2} after \math{n} power iterations.
\begin{lemma}
\label{lemma:estimatenorm}
For some constant \math{c\le(\frac{2}{\pi}+2)^3},
 with probability at least \math{1-\delta},
\mand{
\lambda_n^2\ge
\frac{\norm{\matA}^2}{\sqrt{4+\frac{cd_1}{\delta^3}\cdot 2^{-2n}}}.
} 
\end{lemma}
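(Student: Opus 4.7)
The plan is to reduce the claim to a high-probability lower bound on the initial overlap $c_1=\langle \x_0,\v_1\rangle$ with the top eigenvector of $\matA\transp\matA$, which then follows from standard Gaussian tail estimates.

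First, I would diagonalize $\matA\transp\matA$ with eigenvalues $\tau_1\ge\cdots\ge\tau_{d_1}\ge 0$ (so $\tau_1=\|\matA\|^2$) in an orthonormal eigenbasis $\v_1,\ldots,\v_{d_1}$, and expand $\x_0=\sum_i c_i\v_i$ with $\sum_i c_i^2=1$. Using the closed form $\x_n=(\matA\transp\matA)^n\x_0/\|(\matA\transp\matA)^n\x_0\|$, a direct computation gives $\lambda_n^2=\|\matA\|^2\sqrt{f(n{+}1)/f(n)}$, where $f(k)=\sum_i \mu_i^{2k} c_i^2$ and $\mu_i=\tau_i/\tau_1\in[0,1]$. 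The claim is therefore equivalent to $f(n)/f(n{+}1)\le 4+(cd_1/\delta^3)\cdot 2^{-2n}$.

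Second, I would split the index set into $I=\{i:\mu_i\ge 1/2\}$ and its complement. For $i\in I$, $\mu_i^{2n}\le 4\mu_i^{2n+2}$, so the contribution of $I$ to $f(n)$ is bounded by $4f(n{+}1)$; for $i\notin I$, $\mu_i^{2n}<2^{-2n}$, so the remaining contribution is bounded by $2^{-2n}\sum_i c_i^2=2^{-2n}$. Since $\mu_1=1$ forces $f(n{+}1)\ge c_1^2$, dividing by $f(n{+}1)$ yields $f(n)/f(n{+}1)\le 4+2^{-2n}/c_1^2$. It therefore suffices to show that $c_1^2\ge\delta^3/(cd_1)$ with probability at least $1-\delta$.

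Third, by rotational invariance of the standard Gaussian I may assume $\v_1=\e_1$, so $c_1=z_1/\|\z\|$. For any $K>0$ the failure event decomposes as $\{c_1^2<s\}\subseteq\{z_1^2<sK\}\cup\{\|\z\|^2>K\}$. I would choose $K=2d_1/\delta$ so that Markov's inequality (using $\Exp[\|\z\|^2]=d_1$) bounds the second event by $\delta/2$, and then use the pointwise density bound $\Prob[|z_1|\le t]\le t\sqrt{2/\pi}$ to bound the first event by $\delta/2$ once $sK\le \pi\delta^2/8$. This yields $c_1^2\ge \pi\delta^3/(16 d_1)$ with probability at least $1-\delta$, which is of the desired form. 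The main obstacle is this third step: the cubic $\delta^3$ dependence in the target forces one to simultaneously control a small-ball event on $z_1$ and an upper-tail event on $\|\z\|^2$, and the two failure probabilities must be balanced carefully so that the absolute constant matches the stated $c\le (2/\pi+2)^3$. The first two steps are essentially linear algebra once the ``split by gap'' is identified.
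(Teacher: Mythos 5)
Your proof is correct, and its deterministic half --- expanding $\x_0$ in the eigenbasis of $\matA\transp\matA$, splitting the spectrum at $\tau_1/2$ to get $f(n)/f(n+1)\le 4+2^{-2n}/c_1^2$, and reducing everything to a lower bound on the top-eigenvector overlap --- is exactly the paper's argument (there the coefficients are called $\alpha_i$, and the case where every eigenvalue exceeds $\tau_1/2$ is treated as a separate trivial case, which your formulation absorbs automatically since the complement of $I$ then contributes nothing). Where you genuinely diverge is the probabilistic step. The paper bounds $\Prob[\alpha_1^2\ge\delta/d]$ by writing $\alpha_1^2=z_1^2/\sum_i z_i^2$ as a comparison between a $\chi^2_1$ variable and an independent rescaled $\chi^2_{d-1}$ variable, applying Chebyshev to $\chi^2_{d-1}/(d-1)$ and the same pointwise density bound you use on $z_1$ at the threshold $\delta+\delta^{2/3}$; this yields a statement of the form $\Prob[\alpha_1^2\ge\delta/d]\ge 1-(2/\pi+2)\delta^{1/3}$ which must then be inverted (replace $\delta$ by $(\delta/c^{1/3})^3$) to produce the $1-\delta$, $\delta^3$-dependent form in the lemma. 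Your route --- the union bound $\{c_1^2<s\}\subseteq\{z_1^2<sK\}\cup\{\norm{\z}^2>K\}$ with Markov at $K=2d_1/\delta$ and the small-ball bound on $z_1$, each budgeted $\delta/2$ --- reaches the $1-\delta$ statement directly and gives $c_1^2\ge\pi\delta^3/(16d_1)$, i.e.\ $c=16/\pi\approx 5.1$, which is in fact sharper than the paper's $(2/\pi+2)^3\approx 18$ and comfortably within the stated bound $c\le(2/\pi+2)^3$. The only thing you give up is the mild dimension dependence in the paper's Chebyshev factor $1-2\delta^{1/3}/(d-1)$, which improves with $d$ but is irrelevant to the lemma as stated; your version is cleaner and the constant-balancing worry you flag in your third step resolves in your favor.
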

\paragraph{Remarks}
\math{n\ge c \log\frac{d_1}{\delta}}
gives the desired constant factor approximation.
Since each power iteration takes \math{O(md_1)} time, and we run 
\math{O(\log\frac{d_1}{\delta})} power iterations, 
in \math{O(md_1\log\frac{d_1}{\delta})} time, we obtain a
sufficiently good estimate for \math{\norm{\matA}} (and similarly for 
\math{\norm{\matB}}).
\begin{proof}
Assume that \math{\x_0=\sum_{i=1}^{d_1}\alpha_i\v_i}, where \math{\v_i} are the
eigenvectors of \math{\matA\transp\matA} with corresponding 
eigenvalues
\math{\sigma_1^2\ge\cdots\ge\sigma_{d_1}^2}. 
Note \math{\norm{\matA}^2=\sigma_1^2}.
If \math{\sigma_{d_1}^2\ge{\sigma_1^2/2}}, then it trivially follows that
\math{\norm{\matA\transp\matA\x_n}\ge\sigma_1^2/2} for any \math{n}, so 
assume that 
\math{\sigma_{d_1}^2<{\sigma_1^2/2}}.  We can thus partition the
singular values into those at least \math{\sigma_1^2/2} and those which 
are smaller; the latter set is non-empty.
So assume for some \math{k<d_1}, \math{\sigma_k^2\ge\sigma_1^2/2} and
\math{\sigma_{k+1}^2<\sigma_1^2/2}. 
Since \math{\x_n=\sum_{i}\alpha_i\sigma_i^{2n\v_i}/(\sum_{i}
\alpha_i^2\sigma_i^{4n})^{1/2}},
we therefore have:
\eqan{
\lambda_n^4
&=&
\norm{\matA\transp\matA\x_n}^2=
\frac{\sum_{i=1}^{d_1}\alpha_i^2\sigma_i^{4(n+1)}}
{\sum_{i=1}^{d_1}\alpha_i^2\sigma_i^{4n}}\\
&\ge&
\frac{\sum_{i=1}^{k}\alpha_i^2\sigma_i^{4(n+1)}}
{\sum_{i=1}^{d_1}\alpha_i^2\sigma_i^{4n}}\\
&=&
\frac{\sum_{i=1}^{k}\alpha_i^2\sigma_i^{4(n+1)}}
{\sum_{i=1}^{k}\alpha_i^2\sigma_i^{4n}+\sum_{i=k+1}^{d_1}
\alpha_i^2\sigma_i^{4n}},\\
&=&
\sigma_1^4
\frac{\sum_{i=1}^{k}\alpha_i^2(\sigma_i/\sigma_1)^{4(n+1)}}
{\sum_{i=1}^{k}\alpha_i^2(\sigma_i/\sigma_1)^{4n}+\sum_{i=k+1}^{d_1}
\alpha_i^2(\sigma_i/\sigma_1)^{4n}},\\
&\buildrel(a)\over\ge&
\sigma_1^4
\frac{\sum_{i=1}^{k}\alpha_i^2(\sigma_i/\sigma_1)^{4(n+1)}}
{4\sum_{i=1}^{k}\alpha_i^2(\sigma_i/\sigma_1)^{4(n+1)}+2^{-2n}},\\
&=&
\frac{\sigma_1^4}
{4+2^{-2n}/\sum_{i=1}^{k}\alpha_i^2(\sigma_i/\sigma_1)^{4(n+1)}},\\
&\buildrel(b)\over\ge&
\frac{\sigma_1^4}
{4+2^{-2n}/\alpha_1^2}.
}
(a) follows because for \math{i\ge k+1},
\math{\sigma_i^2<\sigma_1^2/2}; for \math{i\le k},
\math{\sigma_1^2/\sigma_i^2\le 4}; and \math{\sum_{i\ge k+1}\alpha_i^2\le
\sum_{i\ge 1}\alpha_i^2=1}. (b) follows because 
\math{
\sum_{i=1}^{k}\alpha_i^2(\sigma_i/\sigma_1)^{4(n+1)}\ge \alpha_1^2}.
The theorem will now follow if we show that with probability 
at least \math{1-c\delta^{1/3}},
\math{\alpha_1^2\ge \delta/d}.
It is clear that \math{\Exp[\alpha_1^2]=1/d} from isotropy.
Without loss of generality, assume \math{\v_1} is aligned with the 
\math{z_1} axis. So 
\math{\alpha_1^2=z_1^2/\sum_iz_i^2}
(\math{z_1,\ldots,z_d} are independent standard normals).
For \math{\delta<1},
we estimate \math{\Prob[\alpha_1^2\ge\delta/d]} as follows:
\eqan{
\Prob\left[\alpha_1^2\ge\frac{\delta}{d}\right]
&=&\Prob\left[\frac{z_1^2}{\sum_iz_i^2}\ge\frac{\delta}{d}\right]
\ =\ \Prob\left[{z_1^2}\ge\frac{\delta}{d}{\sum_iz_i^2}\right]
\ =\ \Prob\left[{z_1^2}\ge\frac\delta{d-\delta}{\sum_{i\ge2}z_i^2}\right]\\
&\ge&\Prob\left[{z_1^2}\ge\frac\delta{d-1}{\sum_{i\ge2}z_i^2}\right]\\
&\buildrel(a)\over=&\Prob\left[\chi^2_1\ge\frac\delta{d-1}\chi^2_{d-1}\right],\\
&\buildrel(b)\over\ge&\Prob\left[\chi^2_1\ge \delta+\delta^{2/3}\right]
\cdot\Prob\left[
\frac\delta{d-1}\chi^2_{d-1}\le \delta+\delta^{2/3}\right].\\
}
In (a) we compute the probability that a \math{\chi^2_1} random variable
exceeds a multiple of an independent \math{\chi^2_{d-1}} random variable, which
follows from the definition of the \math{\chi^2} distribution 
as a sum of squares
of 
independent standard normals.
(b) follows from independence and 
because one particular realization of the event in (a) is when
\math{\chi^2_1\ge \delta+\delta^{2/3}} and 
\math{\delta\chi^2_{d-1}/(d-1)\le \delta+\delta^{2/3}}.
Since \math{\Exp[\chi^2_{d-1}/(d-1)]=1}, and 
\math{Var[\chi^2_{d-1}/(d-1)]=2/(d-1)}, by Chebyshev's inequality,
\mand{
\Prob\left[
\frac\delta{d-1}\chi^2_{d-1}\le \delta+\delta^{2/3}\right]
\ge 1-\frac{2\delta^{1/3}}{d-1}.
}
From the definition of the \math{\chi^2_1} distribution, we can bound
\math{\Prob[\chi^2_1\le\delta+\delta^{2/3}]},
\mand{
\Prob[\chi^2_1\le\delta+\delta^{2/3}]
=
\frac{1}{2^{1/2}\Gamma(1/2)}
\int_{0}^{\delta+\delta^{2/3}}du\ u^{-1/2}e^{-u/2}
\le \sqrt{\frac{2}{\pi}}(\delta+\delta^{2/3})^{1/2},
}
and so 
\mand{
\Prob\left[\alpha_1^2\ge\frac{\delta}{d}\right]
\ge
\left(1-\sqrt{\frac{2}{\pi}}(\delta+\delta^{2/3})^{1/2}\right)
\cdot
\left(1-\frac{2\delta^{1/3}}{d-1}\right)
\ge1-\left(\frac2\pi+2\right)\delta^{1/3}
.
}
\end{proof}

We now consider the sampling based approach to estimate
the spectral norm. Pre-sample 
the rows of \math{\matA} using probabilities proportional to the
row norms to construct \math{\Atilde}. We know that if 
\math{r\ge(4\rho_A/\beta\epsilon^2)\ln\frac{2d_1}{\delta}}, then
\mand{
\norm{\Atilde\transp\Atilde-\matA\transp\matA}\le
\epsilon\norm{\matA}^2.
}
It follows that we have a \math{\epsilon}-approximation to the 
spectral norm from
\eqan{
\norm{\Atilde\transp\Atilde}
&=&
\norm{\Atilde\transp\Atilde-\matA\transp\matA+\matA\transp\matA}
\ \le\ 
(1+\epsilon)\norm{\matA}^2;\\
\norm{\matA\transp\matA}
&=&
\norm{\matA\transp\matA-\Atilde\transp\Atilde+\Atilde\transp\Atilde}
\ \le\ 
\epsilon\norm{\matA}^2+\norm{\Atilde\transp\Atilde}.
}
Thus,
\math{(1-\epsilon)\norm{\matA}^2\le\norm{\Atilde\transp\Atilde}
\le(1+\epsilon)\norm{\matA}^2}. Along this route, one must 
first sample 
\math{r} rows, and then approximate the spectral norm of the resulting
\math{\Atilde}. We may now combine with the power iteration  on 
\math{\Atilde\transp\Atilde} to get a 
constant factor approximation efficiently (or we may 
compute exactly in \math{O(rd_1^2)}). Specifically, set 
\math{\epsilon=\frac12}, in which case, with high probability, 
\math{\frac12\norm{\matA}^2\le\norm{\Atilde\transp\Atilde}
\le\frac32\norm{\matA}^2}. Now, choose the number of power iterations 
\math{n\ge n^*}, where 
\math{\frac{cd_1}{\delta^3}=2^{n^*}}. In this case, after 
\math{n}  power iterations, we have an estimate which is at least 
\math{\frac{1}{\sqrt{5}}\norm{\Atilde}^2} from Lemma \ref{lemma:estimatenorm},
which proves Theorem~\ref{theorem:spectral}.
\begin{theorem}
\label{theorem:spectral}
With \math{r\ge(4\rho_A/\epsilon^2)\ln\frac{2d_1}{\delta}}, 
the spectral norm estimate \math{\tilde\sigma_1^2}
obtained
after \math{c\ln\frac{d_1}{\delta}} power 
iterations on \math{\Atilde\transp\Atilde} starting from an
isotropic random vector satisfies
\mand{
\frac{1}{2\sqrt{5}}\norm{\matA}^2\le\tilde\sigma_1^2\le\frac{3}{2}
\norm{\matA}^2.
}
Further, the estimate \math{\tilde\sigma_1^2} can be computed in
\math{O(md_1+\rho_A d_1/\epsilon^2\ln^2 (\frac{d_1}{\delta}))}.
\end{theorem}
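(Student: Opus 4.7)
My plan is to combine Theorem~\ref{theorem:matmultsym} (row-norm sampling for matrix multiplication) with Lemma~\ref{lemma:estimatenorm} (power iteration on the sampled Gram matrix) in a two-stage scheme, essentially following the outline given in the paragraph preceding the theorem statement. The idea is that sampling reduces the problem from an $m \times d_1$ matrix to an $r \times d_1$ matrix while preserving the spectral norm up to a constant factor, and then the power iteration on the smaller Gram matrix is cheap enough to run $\Theta(\ln(d_1/\delta))$ times.

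First I would construct $\Atilde = \matQ\matA$ using row-sampling probabilities $p_t \ge \a_t\transp\a_t/\norm{\matA}_F^2$, which can be computed in $O(md_1)$ time and require no knowledge of $\norm{\matA}$. Applying Theorem~\ref{theorem:matmultsym} with $\epsilon = \tfrac{1}{2}$ (so $r \ge (4\rho_A/\epsilon^2)\ln\frac{2d_1}{\delta}$ is more than enough with $\beta = 1$), I get with probability at least $1-\delta$ that
\mand{
\tfrac{1}{2}\norm{\matA}^2 \le \norm{\Atilde\transp\Atilde} \le \tfrac{3}{2}\norm{\matA}^2,
}
by the two triangle-inequality manipulations shown in the preamble. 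This already gives the upper bound of the theorem, since any power-iteration estimate $\tilde\sigma_1^2 = \norm{\Atilde\transp\Atilde\x_n}$ on a unit vector $\x_n$ satisfies $\tilde\sigma_1^2 \le \norm{\Atilde\transp\Atilde} \le \tfrac{3}{2}\norm{\matA}^2$.

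For the lower bound, I would apply Lemma~\ref{lemma:estimatenorm} to $\Atilde \in \R^{r \times d_1}$ (not to $\matA$), using a random isotropic starting vector in $\R^{d_1}$. Choosing $n \ge n^*$ with $2^{n^*} \ge c d_1/\delta^3$, i.e. $n = \Theta(\ln(d_1/\delta))$, kills the $cd_1/\delta^3 \cdot 2^{-2n}$ term so that Lemma~\ref{lemma:estimatenorm} yields $\tilde\sigma_1^2 \ge \norm{\Atilde\transp\Atilde}/\sqrt{5}$ (after absorbing the $4$ inside the square root). Combining with $\norm{\Atilde\transp\Atilde} \ge \tfrac{1}{2}\norm{\matA}^2$ gives $\tilde\sigma_1^2 \ge \norm{\matA}^2/(2\sqrt{5})$, which is the claimed lower bound. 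A union bound over the sampling event and the power-iteration event keeps the overall failure probability $O(\delta)$.

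For the running-time bound, computing the row-norm probabilities costs $O(md_1)$; forming $\Atilde$ costs $O(rd_1)$; each power iteration on $\Atilde\transp\Atilde$ can be executed as two multiplications by $\Atilde$ and $\Atilde\transp$ (never forming the Gram matrix explicitly), each in $O(rd_1)$; and there are $O(\ln(d_1/\delta))$ iterations. Substituting $r = O(\rho_A \epsilon^{-2} \ln(d_1/\delta))$ gives the claimed $O(md_1 + \rho_A d_1 \epsilon^{-2} \ln^2(d_1/\delta))$. I expect the only subtlety to be the bookkeeping of constants and failure probabilities (the Lemma~\ref{lemma:estimatenorm} bound is stated in terms of $\delta^{1/3}$, so one must rescale $\delta$ to maintain a clean probability statement); the core argument is a direct composition of the two earlier results.
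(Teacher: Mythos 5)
Your proposal is correct and follows essentially the same route as the paper: pre-sample with row-norm probabilities, invoke Theorem~\ref{theorem:matmultsym} (via the triangle-inequality argument) with \math{\epsilon=\frac12} to get \math{\frac12\norm{\matA}^2\le\norm{\Atilde\transp\Atilde}\le\frac32\norm{\matA}^2}, then run \math{n\ge n^*} power iterations with \math{2^{n^*}=cd_1/\delta^3} and apply Lemma~\ref{lemma:estimatenorm} to lose only a further factor of \math{\sqrt{5}}. Your additional remarks on implementing each iteration as two multiplications by \math{\Atilde} and \math{\Atilde\transp} and on rescaling \math{\delta} for the union bound are consistent with (and slightly more careful than) the paper's own sketch.
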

As mentioned at the begining of this section, 
constant factor approximations to the spectral norms
 of the relevant matrices is enough to obtain probabilities satisfying
\r{eq:ptreq} for some constant \math{\beta}.

{
\bibliographystyle{natbib}
\bibliography{active,mypapers,masterbib} 
}


\end{document}